\newtheorem{theorem}{Theorem}[section]
\newtheorem{lemma}[theorem]{Lemma}
\newtheorem{defn}{Definition}[section]
\renewcommand{\vec}[1]{\mathbf{#1}}
\newcommand{\matr}[1]{\bm{#1}}     
\newcommand{\dv}[1]{\deg_v{#1}}
\newcommand{\du}[1]{\deg_u{#1}}
\begin{document}

\preprint{APS/123-QED}

\title{Classifying local fractal subsystem symmetry protected topological phases}

\author{Trithep Devakul}
\affiliation{%
Department of Physics, Princeton University, Princeton, NJ 08540, USA
}%
\affiliation{%
Kavli Institute for Theoretical Physics, University of California, Santa Barbara, CA 93106, USA
}%

\date{\today}

\begin{abstract}
We study symmetry-protected topological (SPT) phases of matter in $2$D protected by symmetries acting on fractal subsystems of a certain type.
Despite the total symmetry group of such systems being subextensively large, we show that only a small number of phases are actually realizable by local Hamiltonians.
Which phases are possible depends crucially on the spatial structure of the symmetries, and we show that in many cases no non-trivial SPT phases are possible at all.
In cases where non-trivial SPT phases do exist, we give an exhaustive enumeration of them in terms of their locality.  
\end{abstract}

\maketitle

\section{Introduction}
Understanding and classifying the possible phases of matter has been a long running goal of condensed matter physics.
In systems without any symmetries, one can have topological ordered phases which are long range entangled.
With symmetries present, there are many more possibilities: the symmetry may be spontaneously broken, it may enrich an existing topological order, or it may lead to non-trivial short range entangled phases called symmetry-protected topological (SPT) phases~\cite{Chen2011-et,Chen2011-ss,Schuch2011-jx,Pollmann2010-fl,Senthil2015-tp,Chen2013-gq,Chen2012-tt}.

Recently, a new type of symmetry, called ``subsystem symmetries'', has been gaining interest for a number of reasons.
These are symmetries which act on only a rigid (subextensive) subsystem of the full system, for example, along only a row or a column of a square lattice.
Systems with such symmetries show up in a variety of contexts~\cite{Batista2005-yr,Nussinov2009-ld,Nussinov2009-sw,Xu2004-oj,Xu2005-df,Johnston2012-rt,Vijay2016-dr, PhysRevB.81.184303}.
Note that there is a distinction between subsystem symmetries and higher-form symmetries~\cite{Gaiotto2015-cj}, which act on deformable manifolds.
One reason for the recent interest is due to their connection to fracton topological order~\cite{Vijay2016-dr,Chamon2005-fc,Haah2011-ny,Bravyi2011-fl,Yoshida2013-of,Vijay2015-jj,Pretko2017-ha,Pretko2017-ej,Prem2018-nv,Ma2017-cb,He2017-eq,Gromov2017-zm,Nandkishore2018-ee}.
Namely, systems in $D=3$ dimensions with subsystem symmetries of along $d=2$ planes exhibit a gauge duality to (type-I) fracton topological ordered phases~\cite{Vijay2016-dr,Williamson2016-lv,Shirley2018-en,You2018-as}.
More generally, this can be extended to systems with dimensions $D\geq 3$ and symmetries along regular $1<d<D$ subsystems, whose gauge dual exhibits a generalized fracton topological order.
The case $d=D$ is simply the duality of a model with some global symmetry and a (non-fracton) topologically ordered state, e.g. the gauge dual of the $\mathbb{Z}_2$ symmetric Ising model in $D\geq 2$ is a $\mathbb{Z}_2$ topological order.
The case where $d=1$ is another extreme case, whose gauge dual does not correspond to a topological order.
These should be thought of in analogy to the $D=d=1$ Ising chain, which is dual to another Ising model under the gauge duality.
In the presence of a symmetry group $G$, it is now well known that bosonic $D=d=1$ chains may be classified according to the second cohomology group $\mathcal{H}^2[G,U(1)]$, and may be understood in terms of how the symmetry acts as a projective representation on the edges or under symmetry twists~\cite{Chen2011-ss,Else2014-ar,Levin2012-dv,Wen2017-ak,Wen2014-iz,Barkeshli2014-zg,Tarantino2016-vd,Zaletel2014-ny}.

Going to one higher dimension, $D=2$, $d=1$, we have two dimensional systems with symmetries acting along rigid lines.
It was recently appreciated that such symmetries could protect non-trivial SPT phases, called subsystem SPT phases~\cite{You2018-em}.
An example of such a phase is the 2D cluster state on the square lattice~\cite{Raussendorf2001-xm}, where it is shown that any state within this subsystem SPT phase is useful as a resource for universal measurement based quantum computing (MBQC)~\cite{Else2012-li,Raussendorf2018-nh}, providing a generalization of the connection between MBQC and SPT phases from one dimension~\cite{Else2012-ie,Else2012-li,Miller2015-gl,Stephen2017-cp,Raussendorf2017-gb}.
A classification of such subsystem SPT phases was realized recently in Ref~\onlinecite{Devakul2018-dv} by the present author and colleagues, and relied on the definition of a modified (weaker) equivalence relation between phases.
The reason this was needed in this case is due to the existence of ``subsystem phases'': cases where two states which differ along only a subsystem may belong to distinct phases of matter.
For instance, consider a $D=2$ trivial symmetric state, but along some of the ($d=1$) subsystems, we place a 1D SPT (in such a way that all symmetries are still respected).
This, now, as a whole represents a non-trivial 2D phase of matter protected by the subsystem symmetries, despite looking trivial in most of the bulk.
Furthermore, the existence of such phases means that in the thermodynamic limit where system size is taken to infinity, there are an infinite number of subsystems, and so an infinite number of possible phases.  
The problem with this is that it now takes a subextensive (growing as $\mathcal{O}(L)$ in local systems of size $L\times L$) amount of information to convey exactly what phase a system is in, without assuming any form of translation invariance.
In Ref~\onlinecite{Devakul2018-dv}, it was shown that there existed some intrinsic global ``data'', which we call $\beta$, which is insensitive to the presence of subsystem phases.
All the infinite phases of such a system could therefore be grouped into equivalence classes and classified according to $\beta$.  
This classification has the nice interpretation of being a classification of phases \emph{modulo} lower-dimensional SPT phases, and is related to the problem of classifying $3$D (type-I) fracton topological orders modulo $2$D topological orders~\cite{Shirley2017-fi,Shirley2018-bx,Shirley2018-jy,Shirley2018-yj,Shirley2018-en}.
There is also a connection between this classification and the appearance of a spurious topological entanglement entropy~\cite{Williamson2018-td,Devakul2018-dv,PhysRevB.94.075151,KitaevPreskill,levinwenentanglement}.
The key idea is that a new tool, in this case the modified phase equivalence relation, was necessary in the classification of these subsystem SPT phases.

The topic of interest in this paper is another type  subsystem symmetry: fractal subsystem symmetries.
In $2$D, these may be thought of as ``in-between'' $d=1$ and $d=2$, as symmetries act on subsystems with fractal dimensions $1<d_f<2$.
An early example of such a system is the Newman-Moore model~\cite{Newman1999-fq}, and such models have been useful as a translation invariant toy model of glassiness~\cite{doi:10.1080/14786435.2011.609152} or for their information storage capacity~\cite{Yoshida2011-lk}.
Fractal symmetries have also recently been shown to be able to protect non-trivial SPT phases~\cite{Devakul2018-ru,Kubica2018-dp}.
An example of this is the cluster state on the honeycomb lattice, which (like the square lattice example) has been shown to be useful for MBQC anywhere in the SPT phase~\cite{Devakul2018-di,Stephen2018-qn}.
Here, we wish to ask the more general question of what SPT phases are even possible in such systems with fractal symmetries.
Note that in higher dimensions ($D\geq 3$), similar to models with regular $d$-dimensional subsystem symmetries, the gauge dual of a fractal symmetric model may also result in (type-II) fracton topological order~\cite{Williamson2016-lv,Yoshida2013-of,Devakul2018-ru}, for which very little is currently known about their classification.

Our main finding is that systems with fractal subsystem symmetries are free from subsystem phases and the associated problems that existed for line-like $d=1$ subsystem SPTs.
The key factor at play here is \emph{locality}.
Although the total number of phases is still infinite (a result of the total symmetry group being infinitely large), the vast majority of these phases are highly non-local and therefore unphysical.
If we fix a degree of locality (what we mean by this will be explained) then the number of allowed phases remains finite in the thermodynamic limit.
This allows for the classification of phases directly, without needing to define equivalence classes of phases like before (essentially due to the lack of any ``weak'' subsystem SPT phases~\cite{You2018-em,Devakul2018-dv}).

We first begin by reviewing some necessary preliminary topics in Sec~\ref{sec:prelims}.
We then define fractal symmetries in Sec~\ref{sec:fracsyms}, and discuss the possible local SPT phases in Sec~\ref{sec:localphases}.
In Sec~\ref{sec:construct} we give a explicit constructions for local models realizing an arbitrary local SPT phase.
Sec~\ref{sec:pseudosym} deals with irreversible fractal symmetries and introduces the concept of pseudo-symmetries and pseudo-SPTs.
A summary and discussion of the results is presented in Sec~\ref{sec:discussion}.
Finally, a technical proof of the main result is given in Sec~\ref{sec:proof}.

\section{Preliminaries}\label{sec:prelims}
\subsection{Linear Cellular Automata}\label{sec:lca}
We first describe a class of fractal structures which determine the spatial structure of all our symmetries in this work (see Ref~\onlinecite{Yoshida2013-of} for a nice introduction to such fractals and their polynomial representation).
These fractal structures, which are embedded on to a 2D lattice, are generated by the space-time evolution of a 1D cellular automaton (CA).
In particular, the update rule for this 1D cellular automaton will be linear, translation invariant, local, and reversible.  
These terms will all be explained shortly.

Let $a_{i}^{(j)}\in\mathbb{F}_p$ denote the state of the cell at spatial index $i$ at time index $j$.
Each $a_{i}^{(j)}$ can take on values $0,\dots, p-1$ for some prime $p$ ($p=2$ in the cases with Ising degrees of freedom).
We take periodic boundary conditions in $i$ such that $0 \leq i < L_x$, and define $a_{i+L_x}^{(j)}\equiv a_{i}^{(j)}$.
The state of the full cellular automaton at a time $j$ is given by the vector $\vec{a}^{(j)}\in\mathbb{F}_p^{L_x}$ with elements
 $(\vec{a}^{(j)})_i = a_{i}^{(j)}$,
We will use the notation $v_i$ to denote the $i$th element of a vector $\vec{v}$.
Bold lowercase letters will denote vectors, while bold uppercase letters will denote matrices.

The key ingredient of the cellular automaton is its update rule: given the state $\vec{a}^{(j)}$ at time $j$, how is the state $\vec{a}^{(j+1)}$ at the next time step calculated?
We will consider only the family of update rules of the form
\begin{equation}
    a^{(j+1)}_i = \sum_{k=k_a}^{k_b} c_{k} \vec{a}^{(j)}_{i-k}
    \label{eq:lcaevol}
\end{equation}
where $c_k\in\mathbb{F}_p$ is a set of coefficients only non-zero for $k_a\leq k \leq k_b$.
Note that all addition and multiplication is modulo $p$, following the algebraic structure of $\mathbb{F}_p$.
Linearity refers to the fact that each $a^{(j+1)}_i$ is determined by a linear sum of $a^{(j)}_i$.
Thus, we may represent Eq~\ref{eq:lcaevol} as
\begin{equation}
    \vec{a}^{(j+1)} = \matr{F} \vec{a}^{(j)}
\end{equation}
where $\matr{F}\in\mathbb{F}_p^{L_x\times L_x}$ is an $L_x\times L_x$ matrix with elements given by $F_{i^\prime i} = c_{i^\prime-i}$.
For a given initial state $\vec{a}^{(0)}$, the state at any time $j\geq0$ is simply given by $\vec{a}^{(j)}=\mathbf{F}^j \vec{a}^{(0)}$.

Translation invariance refer to the fact that the update rules do not depend on the location $i$, only on the relative location: $\matr{F}_{i^\prime i} = \matr{F}_{i^\prime + n, i+n}$.
Locality means that $\matr{F}_{i^\prime i}$ is only non-zero for small $|i^\prime-i|$ of order $1$.  
In our case, this means that $|k_a|$ and $|k_b|$ should be small $\mathcal{O}(1)$ values.
Finally, reversibility means that only one $\vec{a}^{(j)}$ can give rise to a $\vec{a}^{(j+1)}$.  
In other words, the kernel of the linear map induced by $\matr{F}$ is empty, and one can define an inverse $\matr{F}^{-1}$ (which will generically be highly non-local) such that $\matr{F}^{-1}\matr{F} = \matr{F}\matr{F}^{-1}=\mathbb{1}$.
This is a rather special property which will depend on the particular update rule as well as choice of $L_x$.

While we assume reversibility for much of this paper, we note that fractal SPTs exist even when the underlying CA is irreversible.
We call such phases pseudo-SPT phases, and are discussed in Sec~\ref{sec:pseudosym}.

\subsection{Polynomials over finite fields}

Cellular automata with these update rules may also be represented elegantly in terms of polynomials with coefficients in $\mathbb{F}_p$.
By this we mean polynomials $q(x)$ over a dummy variable $x$ of the form
\begin{equation}
    q(x) = \sum_{i=0}^{\delta_q} q_i x^i
\end{equation}
where each $q_i\in\mathbb{F}_p$, and the degree $\delta_q\equiv \deg q(x)$ is finite.
The space of all such polynomials is denoted by the polynomial ring $\mathbb{F}_p[x]$.
A state $\vec{a}^{(j)}$ of the cellular automaton may be described by such a polynomial, $a^{(j)}(x)$,
\begin{equation}
    a^{(j)}(x) = \sum_{i=0}^{L_x-1} a^{(j)}_i x^i .
\end{equation}
In the case of periodic boundary conditions one should also work with the identity $x^{L_x}=1$.

Application of the update rule is expressed most simply in the language of polynomials.
Let us define $f(x)$ to be a Laurent polynomial, i.e. $f(x) = \tilde{f}(x) x^{k_a}$ where $\tilde{f}(x)\in\mathbb{F}_p[x]$ is a polynomial (and $k_a$ may be negative), given by
\begin{equation}
    f(x) = \sum_{k=k_a}^{k_b} c_k x^k
\end{equation}
after which the update rule may be expressed simply as multiplication
\begin{equation}
    a^{(j+1)}(x) = f(x) a^{(j)}(x)
\end{equation}
Given an initial state $a^{(0)}(x)$ then, the state at any future time is simply given by $a^{(j)}(x) = f(x)^j a^{(0)}(x)$.
We will assume $c_{k_a}$ and $c_{k_b}$ are non-zero, and $k_b\neq k_a$ (so that $f(x)$ is not a monomial).

The key property of such polynomials that guarantees fractal structures is that for $q(x)\in\mathbb{F}_p[x]$, one has that
\begin{equation}
    q(x)^{p^n} = q(x^{p^n})
\end{equation}
also known as the ``freshman's dream''.
Suppose we start off with the initial state $a^{(0)}(x)=1$.
After some possibly large time $p^n$, the state has evolved to
\begin{equation}
    a^{(p^n)}(x) =f(x)^{p^n} = f(x^{p^n}) =\sum_{k=k_a}^{k_b} c_k x^{k p^n}
\end{equation}
which is simply the initial state at positions separated by distances $p^{n}$.
At time $p^{n+1}$, this repeats but at an even larger scale.
Thus, the space-time trajectory, $a^{(j)}_i$, of this cellular automaton always gives rise to self-similar fractal structures.

There are various other useful properties that will be used in the proof of Sec~\ref{sec:proof}, one of which is that any polynomial $q(x)\in \mathbb{F}_p[x]$ (without periodic boundary conditions) may be uniquely factorized up to constant factors as 
\begin{equation}
    q(x) = q_1(x) q_2(x) \dots q_n(x)
\end{equation}
where each $q_i(x)$ is an irreducible polynomial of positive degree.  
A polynomial is irreducible if it cannot be written as a product of two polynomials of positive degree.
This may be thought of as a ``prime factorization'' for polynomials.  

\subsection{Projective Representations}
The final topic which should be introduced are projective representation of finite abelian groups.
Bosonic SPTs in 1D are classified by the projective representations of their symmetry group on the edge~\cite{Chen2011-ss,Else2014-ar}.
Similarly, subsystem SPTs for which the subsystems terminate locally on the edges (i.e. line-like subsystems) may also be described by projective representations of a subextensively large group on the edge~\cite{You2018-em,Devakul2018-dv}.
The same is true for fractal subsystem symmetries~\cite{Devakul2018-ru}.

Let $G$ by a finite abelian group.
A non-projective (also called linear) representation of $G$ is a set of matrices $V(g)$ for $g\in G$ that realize the group structure: $V(g_1) V(g_2) = V(g_1 g_2)$ for all $g_1,g_2\in G$.
A projective representation is one such that this is only satisfied up to a phase factor,
\begin{equation}
    V(g_1) V(g_2) = \omega (g_1, g_2) V(g_1 g_2)
\end{equation}
where $\omega(g_1, g_2) \in U(1)$ is called the factor system of the projective representation, and must satisfies the properties
\begin{align}
\begin{split}
\omega (g_1, g_2)\omega (g_1g_2, g_3) &= \omega (g_1, g_2g_3)\omega (g_2, g_3) \\
\omega (1, g_1) = \omega (g_1, 1) &= 1
\end{split}
\end{align}
for all $g_1,g_2,g_3\in G$.
A different choice of $U(1)$ prefactors, $V^\prime(g) = \alpha(g) V(g)$ leads to the factor system
\begin{equation}
    \omega^\prime(g_1,g_2) = \frac{\alpha(g_1 g_2)}{\alpha(g_1)\alpha(g_2)} \omega(g_1,g_2).
    \label{eq:prepgauge}
\end{equation}
for $V^\prime(g)$.
Two factor systems related in such a way are said to be equivalent, and belong to the same equivalence class $\omega$.

Suppose we have a factor system $\omega_1(g_1,g_2)$ of equivalence class $\omega_1$, and a factor system $\omega_2(g_1,g_2)$ of class $\omega_2$.  
A new factor system can be obtained as $\omega(g_1,g_2) = \omega_1(g_1,g_2)\omega_2(g_1,g_2)$, which is of class $\omega\equiv \omega_1 \omega_2$.  
This gives them a group structure: equivalence classes are in one-to-one correspondence with elements of the second cohomology group $\mathcal{H}^2[G,U(1)]$, and exhibit the group structure under multiplication.  

In the case of finite abelian groups, a much simpler picture may be obtained in terms of the quantities
\begin{equation}
    \Omega(g_1,g_2) \equiv \frac{\omega(g_1,g_2)}{\omega(g_2,g_1)}
\end{equation}
which is explicitly invariant under the transformations of Eq~\ref{eq:prepgauge}.
They have a nice interpretation of being the commutative phases of the projective representation
\begin{equation}
    V(g_1) V(g_2) = \Omega(g_1,g_2) V(g_2) V(g_1).
\end{equation}
$\Omega(g_1,g_2)$ has the properties of bilinearity and skew-symmetry in the sense that
\begin{align}
    \Omega(g_1g_2,g_3) &= \Omega(g_1,g_3)\Omega(g_2,g_3)\\
    \Omega(g_1,g_2g_3) &= \Omega(g_1,g_2)\Omega(g_1,g_3)\\
    \Omega(g_1,g_2) &= \Omega(g_2,g_1)^{-1}
\end{align}
These properties mean that  $\Omega(g_1,g_2)$ is completely determined by its value on all pairs of generators of $G$.
Suppose $a_1,a_2\in G$ are two independent generators with orders $n_1, n_2$, respectively. 
Then, one can show that $\Omega(a_1,a_2)^{n_1} = \Omega(a_1,a_2)^{n_2} = 1$, and so $\Omega(a_1,a_2) = e^{2\pi i w/\gcd(a_1,a_2)}$ for integer $w$.
The value of $w$ for every pair of generators provides a complete description of the projective representation, and each of them may be chosen independently.

By the fundamental theorem of finite abelian groups, $G$ may be written as a direct product 
\begin{equation}
G = \mathbb{Z}_{n_1} \otimes \mathbb{Z}_{n_2} \otimes \cdots  \otimes \mathbb{Z}_{n_N}
\end{equation}
where each $n_i$ are prime powers.
Let $a_i$ be the generator of the $i$th direct product of $G$ with order $n_i$, and define $m_{ij}$ through $\Omega(a_i, a_j) = e^{2\pi i m_{ij}/\gcd(n_i,n_j)}$.
Each choice of $0\leq m_{ij}< \gcd(n_i,n_j)$ for $i<j$ corresponds to a distinct projective representation.
Indeed, applying the Kunneth formula, one can compute the second cohomology group
\begin{align}
    \mathcal{H}^2[G,U(1)] = \prod_{i < j} \mathbb{Z}_{\gcd(n_i, n_j)}
\end{align}
There is therefore a one-to-one correspondence between choices of $\{m_{ij}\}$ and elements of $\mathcal{H}^2[G,U(1)]$.

Hence, we may simply refer to the commutative phases $\Omega(g_1,g_2)$ of the generators, $\{m_{ij}\}$, as a proxy for the whole projective representation.

\subsection{1D SPTs and twist phases}\label{sec:1dtwist}
Let us now connect our discussion of projective representations to the classification of 1D SPT phases.
There are various ways this connection can be made, for instance, by looking at edges or matrix product state representations~\cite{Else2014-ar,Schuch2011-jx}.
Here, we will be using symmetry twists~\cite{Chen2011-ss,Else2014-ar,Levin2012-dv,Wen2017-ak,Wen2014-iz,Barkeshli2014-zg,Tarantino2016-vd,Zaletel2014-ny}, which turn out to be a natural probe in the case of 2D fractal symmetries~\cite{Devakul2018-ru}.

Suppose we have a 1D SPT described by the unique ground state of the local Hamiltonian $H$ and
global on-site symmetry group $G$.
Let us take the chain to be of length $L_x$ (taken to be large) with periodic boundary conditions.
The symmetry acts on the system as
\begin{equation}
    S(g) = \prod_{i=0}^{L_x-1} u_i(g)
\end{equation}
for $g\in G$, where $u_i(g)$ is the on-site unitary linear representation of the symmetry element $g$ on site $i$, and $[H, S(g)]=0$.
A local Hamiltonian may always be written as
\begin{equation}
    H = \sum_{i=0}^{L_x-1} H_i
\end{equation}
where the sum is over local terms $H_i$ with support only within some $\mathcal{O}(1)$ distance of $i$.  

The twisting procedure begins by constructing a new Hamiltonian, $H_\mathrm{twist}(g)$, for a given $g\in G$.
We pick a cut across which to apply the twist, $x_\mathrm{cut}$, which can be arbitrary.
Then, define the truncated symmetry operator 
\begin{equation}
S_{\geq}(g) = \prod_{i=x_\mathrm{cut}}^{x_\mathrm{cut}+R} u_i(g) 
\end{equation}
for some $1\ll R \ll L_x$.
The twisted Hamiltonian is given by
\begin{equation}
    H_\mathrm{twist}(g) = \sum_{i=0}^{L_x-1} 
    \begin{cases}
    S_{\geq}(g) H_i S_{\geq}(g)^\dagger & \text{if $H_i$ crosses $x_\mathrm{cut}$}\\
    H_i & \text{else}
    \end{cases}
\end{equation}
thus, the Hamiltonian is modified for $H_i$ near $x_\mathrm{cut}$, but remains the same elsewhere.

We can now define the \emph{twist phase}
\begin{equation}
T(g_1, g_2) = \frac{\langle S(g_1) \rangle_{H_\mathrm{twist}(g_2)}}
{\langle S(g_1) \rangle _{H}}
\end{equation}
which is a pure phase representing the charge of the symmetry $g_1$ in the ground state of the $g_2$ twisted Hamiltonian, relative to in the untwisted Hamiltonian.
Here, $\langle O \rangle_{H}$ means that expectation value of the operator $O$ in the ground state of the Hamiltonian $H$.
It is straightforward to show that $T(g_1,g_2)$ does not depend on where we place the cut, $x_\mathrm{cut}$ (this fact will be used to our advantage when twisting fractal symmetries).
The set of twist phases $T(g_1,g_2)$ is a complete characterization of the state.  
Indeed, the correspondence of the twist phases to the projective representation characterizing a phase can be made by simply
\begin{equation}
    \Omega(g_1,g_2) = T(g_1,g_2).
\end{equation}
as such, we refer to $\Omega(g_1,g_2)$ itself as the twist phases.

An alternate, but equivalent, view is to examine the action of $S_\geq(g_2)$ on the ground state $\ket{\psi}$.  
The action of $S_\geq(g_2)$ on $\ket{\psi}$ must act as identity on the majority of the system, except near $x_\mathrm{cut}$ and $x_\mathrm{cut}+R$, where it may act as some unitary operation,
\begin{equation}
    S_\geq(g_2)\ket{\psi} = U_{g_2} \widetilde{U}_{g_2} \ket{\psi}.
\end{equation}
where $U_{g_2}$ acts near $x_\mathrm{cut}$, and $\widetilde{U}_{g_2}$ acts near $x_\mathrm{cut}+M$.
Then, the twisted Hamiltonian acting on the ground state can be thought of as
\begin{equation}
    H_\mathrm{twist}(g_2)\ket{\psi} = U_{g_2} H U_{g_2}^{\dagger}\ket{\psi}
\end{equation}
 such that the ground state of $H_\mathrm{twist}(g_2)$ is given by $U_{g_2}\ket{\psi}$.
The twist phase is then given by 
\begin{align}
\begin{split}
   \Omega(g_1,g_2) &= \frac{ \bra{\psi} U_{g_2}^\dagger S(g_1) U_{g_2} \ket{\psi}}
   {\bra{\psi} S(g_1) \ket{\psi}}\\
    &=  \bra{\psi} S(g_1)^\dagger U_{g_2}^\dagger S(g_1) U_{g_2} \ket{\psi}
    \end{split}
    \label{eq:meascharge}
\end{align}
which measures the charge of the excitation created by $U_{g_2}$ under the symmetry $S(g_1)$.
Thus, all information regarding the phase is contained within this local unitary matrix $U_{g_2}$ that appears due to a truncated symmetry operator.

\section{Fractal Symmetries}\label{sec:fracsyms}

We can now discuss fractal symmetries.
The fractal symmetries we consider may be thought of as a combination of an on-site symmetry group imbued with some spatial structure.

Let us first consider a system with one fractal symmetry, described by the cellular automaton polynomial $f(x)$ over $\mathbb{F}_p$, which we will denote by 
\begin{equation}
    G = \mathbb{Z}_p^{(f,y)}
\end{equation}
which means that the on-site symmetry group is $\mathbb{Z}_p$, while the superscript, $(f,y)$,  denotes the associated spatial structure: $f$ denotes a cellular automaton described by the polynomial $f(x)$, and $y$ denotes the positive ``time'' direction of this cellular automaton (in this case, the positive $y$ direction).

Our systems have degrees of freedom placed on the sites of an $L_x\times L_y$ square lattice with periodic boundary conditions.
Each site is labeled by its index along the $x$ and $y$ direction, $(i,j)$, and transforms as an on-site linear representation $u_{ij}(g)$ under $g\in G$.  
For simplicity, we will only consider the cases where $L_x=p^N$ is a power of $p$, and $L_y$ chosen such that $f(x)^{L_y} = 1$.
The latter is not difficult to accomplish, as $f(x)^{L_x} = f(x^{L_x}) = f(1)$, so we may simply choose $L_y = k L_x>0$ such that $f(1)^{k}=1$.  
Note that reversibility of $f(x)$ implies $f(1)\neq 0$.

The symmetries of the system are in one-to-one correspondence with valid space-time histories of the cellular automaton.  
The choices of $L_x$ and $L_y$ made earlier means that any state $\vec{a}^{(0)}$ (on a ring of circumference $L_x$) is cyclic in time with period dividing $L_y$: $\vec{a}^{(L_y)} = \vec{a}^{(0)}$.
Given a valid trajectory $\vec{a}^{(j)}$, the operator $\prod_{ij} u_{ij}(g^{a_{i}^{(j)}})$ for $g\in G$ represents a valid symmetry operator.
The entire space-time trajectory $\vec{a}^{(j)}$ is determined solely by its state at a particular time $j_0$, $\vec{a}^{(j_0)}$, which can be in any of $p^{L_x}$ states.
The total symmetry group will therefore be given by $G_\mathrm{tot}=(\mathbb{Z}_p)^{L_x}$.

Let us identify a particular element $\mathit{g}$ as a generator for $\mathbb{Z}_p$.
Then, let a set of $L_x$ generators for $G_\mathrm{tot}=\mathbb{Z}_p^{L_x}$, defined \emph{with respect to $j_0$}, be $\{\mathit{g}_i^{(j_0)}\}_{0\leq i < L_x}$.
We may then define a vectorial representation of group elements via the one-to-one mapping from vectors $\vec{v}\in\mathbb{F}_p^{L_x}$ to group elements, 
\begin{equation}
    \mathit{g}^{(j_0)}[\vec{v}] = \prod_{i=0}^{L_x-1} (\mathit{g}^{(j_0)}_i)^{v_i} \in G_\mathrm{tot}
\end{equation}
The action of each of these symmetry elements on the system is defined as
\begin{equation}
    S(\mathit{g}^{(j_0)}[\vec{v}]) = \prod_{j=0}^{L_x-1} u_j[\mathit{g};\matr{F}^{j-j_0} \vec{v}]
\end{equation}
where we have introduced the vectorial representation for $u_{ij}(g)$ on a row $j$,
\begin{equation}
    u_j[\mathit{g};\vec{v}] \equiv \prod_{i=0}^{L_x-1} u_{ij}(\mathit{g}^{v_i})
\end{equation}
Thus, $S(\mathit{g}^{(j_0)}[\vec{v}])$ is the unique symmetry operator that acts as $u_j(\mathit{g})[\vec{v}]$ on the row $j_0$.
It can be viewed as the symmetry operation corresponding to the space-time trajectory of a CA which is in the state $\vec{v}$ at time $j_0$.
Because $f(x)^{L_y}=1$ due to our choice of $L_x$ and $L_y$, any initial state is guaranteed to come back to itself after time $L_y$, representing a valid cyclic space-time trajectory.

\begin{figure}
    \centering
    \includegraphics[width=0.40\textwidth]{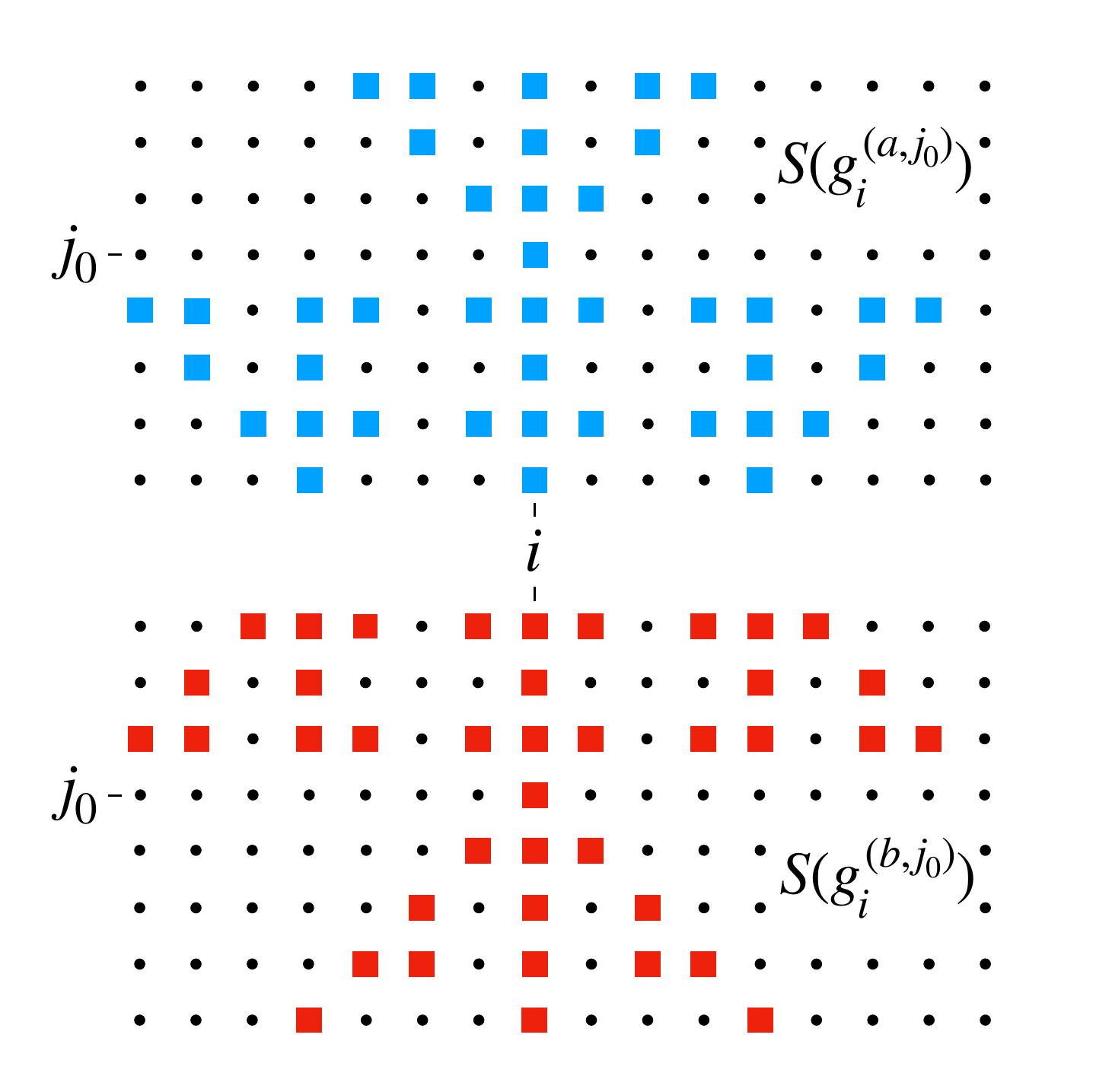}
    \caption{Example of a symmetry generator (top) $S(g_i^{(a,j_0)})$ or (bottom) $S(g_i^{(b,j_0)})$ for the fractal generated by $f(x)=\bar x+1+x$ with $p=2$.
    Sites with blue or red squares are acted on by $u_{ij}(\mathit{g}^{(a)})$ or  $u_{ij}(\mathit{g}^{(b)})$, respectively, and form a valid space-time trajectory of a cellular automaton.
    }
    \label{fig:basischoice}
\end{figure}

We may choose as a generating set the operators defined with respect to row $j_0$,
\begin{equation}
    S(\mathit{g}^{(j_0)}[\vec{e}_i])
    = S(\mathit{g}_i^{(j_0)})
\end{equation}
where $\vec{e}_i$ is the unit vector $(\vec{e}_i)_{i^\prime} = \delta_{i i^\prime}$.
These act on only a single site on the row $j_0$, and an example of which is shown in Figure~\ref{fig:basischoice} (top).
However, notice that this choice of basis is only ``most natural'' when viewed on the row $j_0$.
Suppose we wanted to change the row which we have defined our generators with respect to from $j_0$ to $j_1$.
How are the new operators related to our old ones?  
Well, one can readily show that
\begin{align}
    S(\mathit{g}^{(j_1)}[\vec{v}]) &= \prod_j u_j[\mathit{g};\matr{F}^{j-j_1}\vec{v}]\\
     &= \prod_j u_j[\mathit{g};\matr{F}^{j-j_0}\matr{F}^{j_0-j_1}\vec{v}] \\
     &= S(\mathit{g}^{(j_0)}[\matr{F}^{j_0-j_1}\vec{v}])
\end{align}
is simply related via multiplication of $\vec{v}$ by powers of $\matr{F}$.
Thus, 
\begin{equation}
    \mathit{g}^{(j_1)}[\vec{v}] = 
    \mathit{g}^{(j_0)}[\matr{F}^{j_0-j_1}\vec{v}] 
\label{eq:vectorevolve}
\end{equation}

In general, we can have systems with multiple sets of fractal symmetries.  
The other main situation we consider is the case of two fractal symmetries of the form
\begin{equation}
    G = \mathbb{Z}_p^{(f,y)} \times \mathbb{Z}_p^{(\bar{f},\bar{y})}
\end{equation}
where $\bar{x}\equiv x^{-1}$ and $\bar{f}(x) \equiv f(\bar{x})$.
This is the form of fractal symmetry known to protect non-trivial fractal SPTs~\cite{Devakul2018-ru,Kubica2018-dp}.
The first fractal represents a CA evolving in the positive $y$ direction with the rule $f(x)$, and the second represents a CA evolving in the opposite $y$ direction with the rule $\bar{f}(x)$ (they are spatial inversions of one another).
In this case, we have one generator from each $\mathbb{Z}_p$, $\mathit{g}^{(a)}$ and $\mathit{g}^{(b)}$, and we can define two sets of fractal symmetry generators as above with respect to a row $j_0$.
Let us call the two sets of generators $\{\mathit{g}^{(a,j_0)}_i\}_i$ and $\{\mathit{g}^{(b,j_0)}_i\}_i$, and define their corresponding vectorial representation.
A general $a$ or $b$ type symmetry acts as
\begin{align}
\begin{split}
    S(\mathit{g}^{(a,j_0)}[\vec{v}]) &= \prod_{j=0}^{L_x-1} u_j[\mathit{g}^{(a)};\matr{F}^{j-j_0} \vec{v}]\\
    S(\mathit{g}^{(b,j_0)}[\vec{v}]) &= \prod_{j=0}^{L_x-1} u_j[\mathit{g}^{(b)};(\matr{F}^T)^{j_0-j} \vec{v}]
    \end{split}
\end{align}
where we have used the fact that the matrix form of $\bar{f}(x)$ is given by $\matr{F}^T$.
A generator for an $a$ and a $b$ type symmetry are shown in Figure~\ref{fig:basischoice}.
The generalization of Eq~\ref{eq:vectorevolve} for moving to a new choice of basis $j_1$ for an $a$ or $b$ type symmetry is
\begin{align}
\begin{split}
    \mathit{g}^{(a,j_1)}[\vec{v}] &= 
    \mathit{g}^{(a,j_0)}[\matr{F}^{j_0-j_1}\vec{v}]\\
    \mathit{g}^{(b,j_1)}[\vec{v}] &= 
    \mathit{g}^{(b,j_0)}[(\matr{F}^{T})^{j_1-j_0}\vec{v}]
    \end{split}
    \label{eq:gabevolve}
\end{align}
\section{Local phases}\label{sec:localphases}

Consider performing the symmetry twisting experiment on a system with fractal symmetries.
We can view the system as a cylinder with circumference $L_x$ and consider twisting the symmetry as discussed in Sec~\ref{sec:1dtwist}.
We separately discuss the cases of one or two fractal symmetries of a specific form first, and then go on to more general combinations.
Our main findings in this section are summarized as:
\begin{enumerate}
\item For the case of one fractal symmetry, $G=\mathbb{Z}_p^{(f,y)}$, no non-trivial SPT phases may exist
\item For the case of two fractal symmetries, $G=\mathbb{Z}_p^{(f,y)}\times\mathbb{Z}_p^{(\bar{f},\bar{y})}$, if we only allow for locality up to some lengthscale $\ell$, then there are a only a finite number of possible SPT phases (scaling exponentially in $\ell^2$)
\item For the case of more fractal symmetries, it is sufficient to identify pairs of symmetries of the form $\mathbb{Z}_p^{(f,y)}\times\mathbb{Z}_p^{(\bar{f},\bar{y})}$, and apply the same results from above.
\end{enumerate}

\subsection{One fractal symmetry}
\label{subsec:localone}

Let us take $G=\mathbb{Z}_p^{(f,y)}$ and consider twisting by a particular element $g^{(j_0)}_i \in G_\mathrm{tot}$.  
Since the twist phase doesn't depend on the position of the cut, we can choose to make the cut on the row $j_\mathrm{cut}=j_0$.  
The twisted Hamiltonian $H_\mathrm{twist}(g_i^{(j_0)})$ is then obtained by conjugating terms in the Hamiltonian which cross $j_\mathrm{cut}$ by the truncated symmetry operator $S_{\geq}(g_i^{(j_0)})$.

Let the Hamiltonian be written as a sum 
\begin{equation}
    H = \sum_{i,j} H_{ij}
\end{equation}
where each $H_{ij}$ is a local term with support near site $(i,j)$.
Now, consider twisting the Hamiltonian by $g_i^{(j_0)}$ across the cut which also goes along the row $j_0$.  
As can be seen in Figure~\ref{fig:fractwist} (left), $S_{\geq}(g_i^{(j_0)})$ acts on a single site on row $j_0$, and extends into the fractal structure on the rows above.  
The important point is that $S_{\geq}(g_i^{(j_0)})$ \emph{only acts differently from an actual symmetry operator at the point $(i,j_0)$} (and on some row $j_0+R$ far away).
Thus, the twisted Hamiltonian may be written as
\begin{equation}
    H_\mathrm{twist}(g_i^{(j_0)})\ket{\psi} = U_{g_i^{(j_0)}} H U_{g_i^{(j_0)}}^\dagger\ket{\psi}
\end{equation}
when acting on the ground state $\ket{\psi}$, 
for some unitary $U_{g_i^{(j_0)}}$ with support near the site $(i,j_0)$.
Note that there is always some freedom in choosing this unitary.

Then, consider measuring the charge of a symmetry  $g_{i^\prime}^{(j_0-l_y)}$ in response to this twist, as in Eq~\ref{eq:meascharge}.
Clearly, only those symmetry operators whose support overlaps with the support of $U_{g_i^{(j_0)}}$ may have picked up a charge.
Suppose the support of every $U_{g_i^{(j_0)}}$ is bounded within some $(2l_x+1)\times (2l_y+1)$ box centered about $(i,j_0)$, such that only sites $(i^\prime,j^\prime)$ with $|i^\prime-i|\leq l_x$ and $|j^\prime-j_0|\leq l_y$ lie in the support.
As can be seen in Figure~\ref{fig:fractwist} (left), $S(g_{i^\prime}^{(j_0-l_y)})$ only overlaps with this box for $i^\prime$ in the range
\begin{equation}
    -l_x- 2l_y k_b \leq i^\prime-i \leq l_x- 2l_y k_a
    \label{eq:diagband}
\end{equation}
and therefore, $\Omega(g_{i^\prime}^{(j_0-l_y)}, g_{i}^{(j_0)})$ may only be non-trivial if $i^\prime-i$ is within some small range.
This places a constraint on the allowed twist phases.
In addition, this must be true \emph{for all choices of $j_0$}.  
It turns out this is a \emph{very} strong constraint, and eliminates all but the trivial phase in the case of $G=\mathbb{Z}_p^{(f,y)}$, and only allows a finite number of specific solutions for the case $G=\mathbb{Z}_p^{(f,y)}\times \mathbb{Z}_p^{(\bar f, \bar y)}$, as we will show.

We also do not strictly require that the support of $U_{g_i^{(j_0)}}$ be bounded in a box.  
This will generally not be the case, as the operator may have an exponentially decaying tail.
Consider a unitary $U$ which has some nontrivial charge $e^{i\phi}\neq 1$ under $S$, meaning
\begin{equation}
     S U S^\dagger = e^{i\phi} U
\end{equation}
when acting on the ground state.
Clearly, if the support of $U$ and $S$ are disjoint, this cannot be true.
Next, consider any decomposition of $U$ into a sum of matrices $U_k$, 
$ U = \sum_k U_k $,
and suppose that some of the $U_k$ had disjoint support with $S$.  
Then, we may write
\begin{equation}
    U = \sum_{k \in \overline{\mathcal{D}}} U_k + \sum_{k\in \mathcal{D}} U_k
\end{equation}
where $k\in\mathcal{D}$ are all the $k$ for which $U_k$ and $S$ have disjoint support, and $k\in\overline{\mathcal{D}}$ are all the $k$ for which they do not.
But then
\begin{align}
    S U S^\dagger &= \sum_{k \in \overline{\mathcal{D}}} S U_k S^\dagger + \sum_{k\in\mathcal{D}} U_k\\
    &\neq e^{i\phi} U 
\end{align}
as the disjoint component has not picked up a phase $e^{i\phi}$, and
$SU_k S^\dagger$ for $k\in\mathcal{D}$ cannot have disjoint support with $S$ (since only identity maps to identity under unitary transformations) and so can't affect the disjoint component of $U$.
Thus, let us define a subset of sites, $\mathcal{A}(U)$, defined as
\begin{equation}
    \mathcal{A}(U) = \bigcap_{\substack{\mathrm{decomps}\\U=\sum_k U_k} }
    \bigcap_k \mathrm{Supp}(U_k)
    \label{eq:mcAdef}
\end{equation}
where the first intersection is over all possible decompositions $U=\sum_k U_k$, and $\mathrm{Supp}(U_k)$ is the support of $U_k$ (the subset of sites for which it acts as non-identity).
$U$ can only have nontrivial charge under $S$ if $\mathcal{A}(U)$ overlaps with the support of $S$.
In our case, $l_x$ and $l_y$ should actually be chosen such that $\mathcal{A}(U_{g_i^{(j_0)}})$ may always be contained within the $(2l_x+1,2l_y+1)$ box.
An exponentially decaying tail of $U$ is therefore completely irrelevant, as $\mathcal{A}(U)$ only cares about the smallest part, before the decay begins.
The exact value of $l_x$ or $l_y$ is not too important --- what is important is that it is finite and small.

We also note that the twist phases obtained when twisting along a cut in the $y$ direction will be different, but are not independent of our twist phases for a cut along the $x$ direction.
To see why this is, consider a truncated symmetry operator which has been truncated by a cut in the $y$ direction.  
This may alternatively be viewed as an untruncated symmetry operator, multiplied by $S_{\geq}(g_i^{(j)}$ at various $(i,j)$s located near the cut.  
The action of twisting this symmetry for a cut along the $y$ direction is then also fully determined by the same set of $U_{g_i^{(j)}}$ from before, and is therefore not independent of the twist phases for a cut along the $x$ direction.
Thus, it is sufficient to examine only the set of twist phases for a cut parallel to $x$, as we have been discussing.
As we chose $y$ to be the ``time'' direction of our CA, twisting along the $x$ direction is far more natural.

Let us make some definitions which will simplify this discussion.
Notice that $\Omega(g^{(j_0)}[\vec{v}],g^{(k_0)}[\vec{w}])$ may be described by the bilinear form $\mathbb{F}_p^{L_x}\times \mathbb{F}_p^{ L_x} \rightarrow \mathbb{F}_p$ represented by the skew-symmetric matrix $\matr{W}^{(j_0,k_0)}\in\mathbb{F}_p^{L_x\times L_x}$ defined according to
\begin{equation}
    \Omega(g^{(j_0)}[\vec{v}], g^{(k_0)}[\vec{w}]) = e^{\frac{2\pi i}{p}  \vec{v}^T \matr{W}^{(j_0,k_0)} \vec{w}}
\end{equation}
and that $\matr{W}^{(j_0, k_0)}$ for any $(j_0,k_0)$ contains full information of the twist phases.  
Furthermore, since $g^{(j_1)}[\vec{v}] = g^{(j_0)}[\matr{F}^{j_0-j_1}\vec{v}]$, 
we can deduce that $\matr{W}$ transforms under this change of basis as
\begin{align}
        \matr{W}^{(j_1,k_1)} &= (\matr{F}^{j_0-j_1})^T \matr{W}^{(j_0,k_0)} \matr{F}^{k_0-k_1}
        \label{eq:Wevolvemat}
\end{align}

We say that a matrix $\matr{W}^{(j_0-l_y,j_0)}$, for a particular choice of $j_0$,  is \emph{local} if its only non-zero elements $W^{(j_0-l_y,j_0)}_{i^\prime i}\neq 0$ are within a small diagonal band given by Eq~\ref{eq:diagband}.
A stronger statement, which we will call \emph{consistent locality}, is that this is true for all $j_0$.
The matrix $\matr{W}^{(j_0-l_y,j_0)}$ for a physical state must be consistently local.

Let us adopt a polynomial notation which will be useful to perform computations.
We may represent the matrix $\matr{W}^{(j_0,k_0)}$ by a polynomial $W^{(j_0,k_0)}(u,v)$ over $\mathbb{F}_p$ as
\begin{equation}
    W^{(j_0,k_0)}(u,v) = \sum_{i i^\prime} W^{(j_0,k_0)}_{i^\prime i} u^{i^\prime} v^{i^\prime-i}
\end{equation}
with periodic boundary conditions $u^{L_x}=v^{L_x}=1$.
Locality is simply the statement that the powers of $v$ in this polynomial must be bounded by Eq~\ref{eq:diagband} (modulo $L_x$).
Now, consider what happens to this polynomial as we transform our basis choice from $j_0\rightarrow j_0-n$,
\begin{align}
    W^{(j_0-n-l_y,j_0-n)}(u,v) =  f(v)^n f(\bar{u}\bar{v})^n W^{(j_0-l_y,j_0)}(u,v)
\end{align}
which must be local for all $n$ if $W^{(j_0-l_y,j_0)}(u,v)$ is to be consistently local.

Let us start with $j_0=0$, and suppose that we have some $W^{(-l_y,0)}(u,v)$ that is non-zero and local. 
By locality, $W^{(-l_y,0)}(u,v)$ may always be brought to a form where the powers of $v$ are all within the range given by Eq~\ref{eq:diagband}.
Let $v^{a}$ and $v^{b}$ be the smallest and largest powers of $v$ in $W^{(l_y,0)}(u,v)$ once brought to this form, which must satisfy
\begin{equation}
    -l_x-2l_y k_b\leq a\leq b\leq l_x-2l_y k_a
\end{equation}
Now, consider $W^{(-l_y-n,-n)}(u,v)$ for small $n$,
\begin{align}
    W^{(-l_y-n,-n)}(u,v) =  f(v)^n f(\bar{u}\bar{v})^n W^{(l_y,0)}(u,v)
\end{align}
which (by adding degrees) will have $v^{a-n\delta_f}$ and $v^{b+n\delta_f}$ as the smallest and largest powers of $v$, where $\delta_f=\deg(x^{-k_a}f(x))>0$.
The smallest and largest powers will therefore keep getting smaller and larger, respectively, as we increase $n$.
Thus, there will always be some finite $n$ beyond which locality is violated, and so $W^{(-l_y,0)}(u,v)$ can never be consistently local.
The only consistently local solution is therefore given by $W^{(-l_y,0)}(u,v)=0$, which corresponds to the trivial phase.
We have therefore shown that no non-trivial local SPT phase can exist protected by only $G=\mathbb{Z}_p^{(f,y)}$ symmetry.

\begin{figure*}
    \centering
    \includegraphics[width=0.9\textwidth]{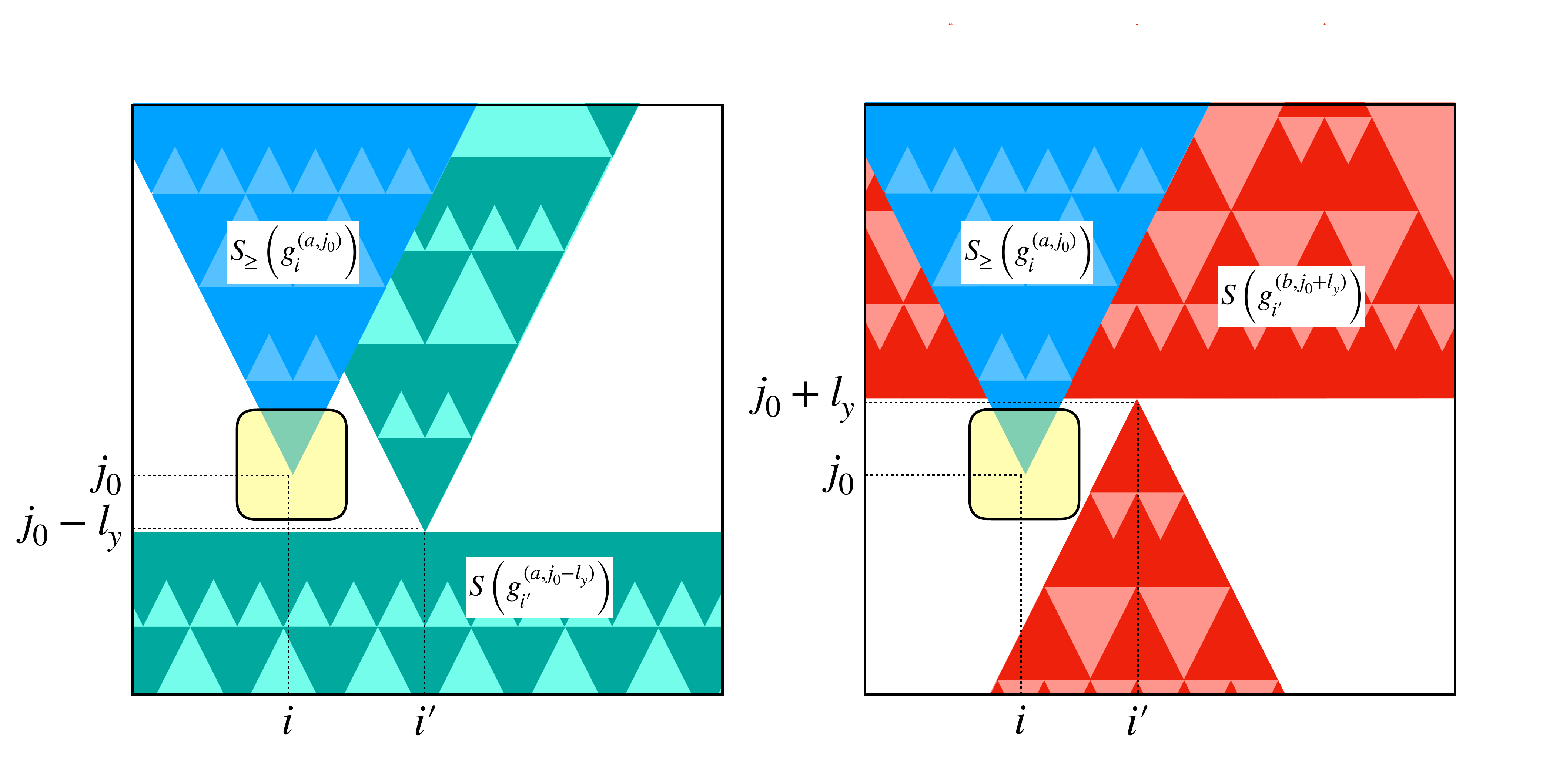}
    \caption{Measurement of the twist phases for (left) $\Omega(g^{(a,j_0-l_y)}_{i^\prime}, g^{(a,j_0)}_i)$ and (right) $\Omega(g^{(b,j_0+l_y)}_{i^\prime},g^{(a,j_0)}_i)$.
    Due to locality, the twist phase may only be non-trivial if the support of (left) $S(g^{(a,j_0-l_y)}_{i^\prime}$ or (right) $S(g^{(b,j_0+l_y)}_{i^\prime})$ has some overlap with the yellow box of size $(2l_x+1)\times(2l_y+1)$ about $(i,j_0)$.
    This implies that the twist phase must be trivial for  $i^\prime$ outside of a small region around $i$, a property which we call locality.
    However, this must be true for all choices of $j_0$, which greatly constrains the allowed twist phases.
    In the case of twist phases between the same type of symmetry (left), only the trivial set of twist phases, all $\Omega(g^{(a,j_0-l_y)}_{i^\prime}, g^{(a,j_0)}_i)=1$ is allowed.  
    Between an $a$ and a $b$ type symmetry (right), we show that only a finite number of solutions exist.
    }
    \label{fig:fractwist}
\end{figure*}

\subsection{Two fractal symmetries}\label{subsec:localtwo}
Let us now consider the more interesting case, $G=\mathbb{Z}_p^{(f,y)}\times \mathbb{Z}_p^{(\bar f,\bar y)}$, for which we know non-trivial SPT phases can exist.  
In this case, we have the symmetry generators $g_i^{(\alpha,j_0)}$ for $\alpha\in\{a,b\}$, and $0\leq i < L_x$.
As we showed in the previous section, the twist phase between two $a$ or two $b$ symmetries must be trivial.
The new ingredient comes in the form of non-trivial twist phases between $a$ and $b$ symmetries.

As can be seen in Fig~\ref{fig:fractwist}, by the same arguments as before, the twist phase
\begin{equation}
    \Omega(g_{i^\prime}^{(b,j_0+l_y)},g_{i}^{(a,j_0)})
\end{equation}
may only be non-trivial if
$i^\prime-i$ lies within some finite range,
\begin{equation}
    -l_x+2 l_y k_a \leq i^\prime-i\leq l_x + 2 l_y k_b.
    \label{eq:diagband2}
\end{equation}

Let us again define the matrix $\matr{W}^{(k_0,j_0)}$, but this time only between the $a$ and $b$ symmetries via
\begin{equation}
    T(g^{(b,k_0)}[\vec{w}], g^{(a,j_0)}[\vec{v}]) = e^{\frac{2\pi i}{p} \vec{w}^T\matr{W}^{(k_0,j_0)}\vec{v}}
\end{equation}
note that $\matr{W}^{(k_0,j_0)}$ need not be skew-symmetric like before.
From Eq~\ref{eq:gabevolve}, the changing of basis is given by
\begin{equation}
    \matr{W}^{(k_1,j_1)} = \matr{F}^{k_1-k_0} \matr{W}^{(j_0,j_1)} \matr{F}^{j_0-j_1}
\end{equation}
We are looking for matrices $W^{(j_0+l_y,j_0)}_{i^\prime i}$ which are \emph{local} (only non-zero within the diagonal band Eq~\ref{eq:diagband2}), and also \emph{consistently local}, meaning that this is true for all $j_0$.
Starting with $j_0=0$, then, we are searching for a local matrix $\matr{W}^{(l_y,0)}$, for which
\begin{equation}
    \matr{W}^{(l_y+n,n)} = \matr{F}^n \matr{W}^{(l_y,0)} \matr{F}^{-n}
    \label{eq:Wevolvemat2}
\end{equation}
is also itself local for all $n$.

Let us again go to a polynomial representation
\begin{equation}
    W^{(l_y,0)}(u,v) = \sum_{i^\prime i} W^{(l_y,0)}_{i^\prime i} u^{i^\prime} v^{i^\prime-i}
    \label{eq:Wpolydef}
\end{equation}
which leads to the relation
\begin{equation}
    W^{(l_y+n,n)}(u,v) =  f(v)^{-n} f(uv)^n W^{(l_y,0)}(u,v) 
\end{equation}
which must have only small (in magnitude) powers of $v$ for all $n$.  
However, $f(v)^{-1} \equiv f(v)^{L_y-1}$ contains arbitrarily high powers of $v$, and therefore simply adding degrees as before does not work and we may expect that a generic $W^{(l_y,0)}(u,v)$ will become highly non-local immediately.
Instead, what must be happening is that, at each step, $f(uv)W^{(j_0+l_y,j_0)}(u,v)$ must contain some factor of $f(v)$ (when viewed as a polynomial without periodic boundary conditions) such that the $f(v)^{-1}$ can divide out this factor cleanly, producing a local $W^{(j_0+1+l_y,j_0+1)}(u,v)$.

How does this work in the case of the known fractal SPT~\cite{Devakul2018-ru}?  In that case, $\matr{W}^{(0,0)}$ is already local and is given by the identity matrix.  
Then, clearly $\matr{W}^{(n,n)} = \matr{W}^{(0,0)}$ as it is invariant under Eq~\ref{eq:Wevolvemat2}, and remains local for all $n$.
In the polynomial language, the identity matrix corresponds to the polynomial $W^{(0,0)}(u,v) = \sum_{i} u^i$, which has the property of translation invariance: $W^{(0,0)}(u,v) = u W^{(0,0)}(u,v)$.
In this case, $f(uv)W^{(0,0)}(u,v) = f(v)W^{(0,0)}(u,v)$, and so can be safely multiplied by $f(v)^{-1}$.
In fact, any translation invariant solution, $W^{(0,0)}(u,v) = g(v)\sum_{i} u^i $ for any $g(v)$, is invariant under multiplying by $f(v)^{-1}f(uv)$.

\begin{figure*}
    \centering
    \includegraphics[width=\textwidth]{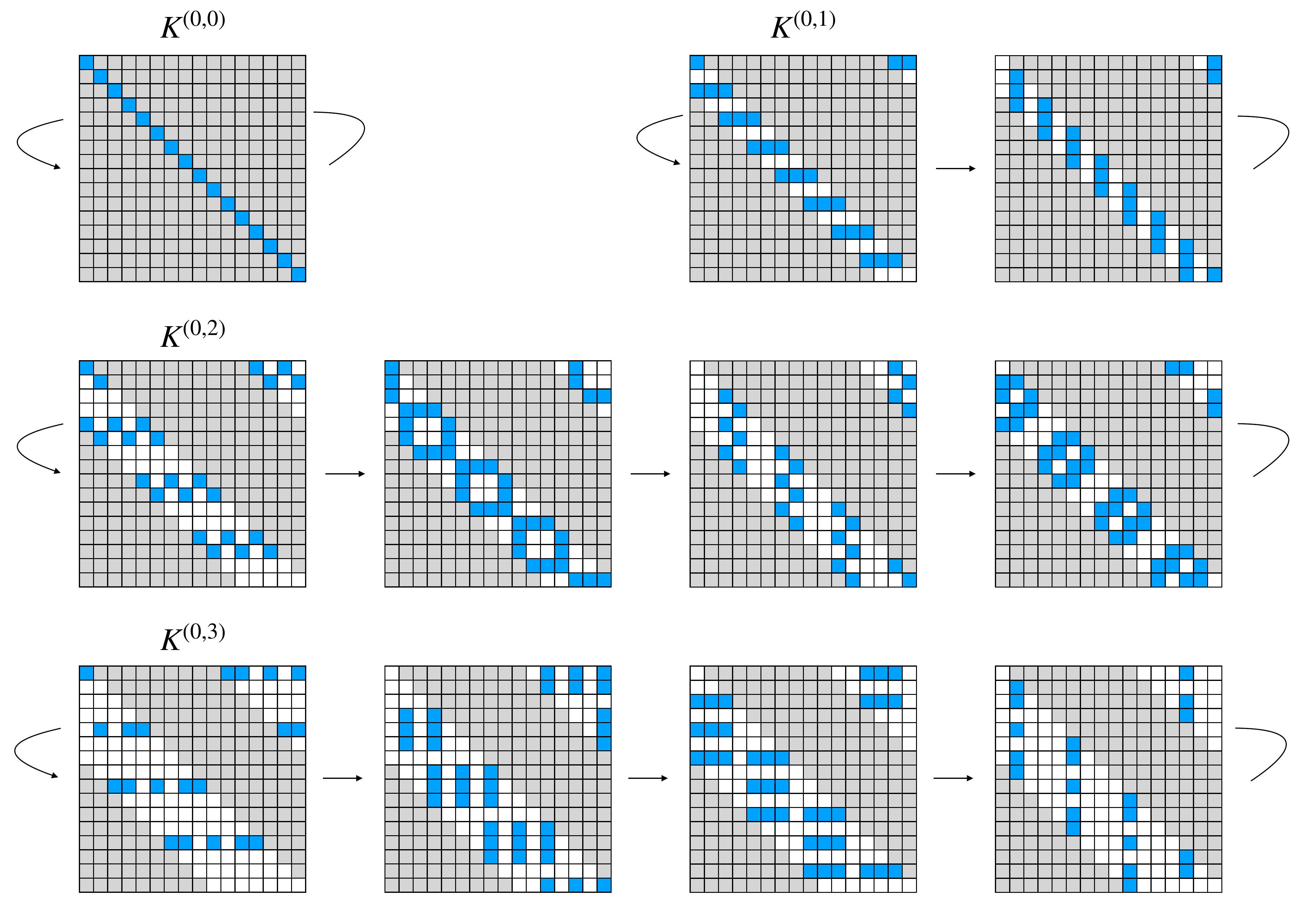}%
    \caption{Visualization of the matrices $\matr{K}^{(k,m)}$ for the example of $f(x)=1+x+x^2$ and $p=2$, for $k=0$ and  $m=0,1,2,3$ (other $k$ can be obtain by shifting every element $k$ steps to the left).
    Each blue cell $(i^\prime,i)$, counting from the top left, represents a non-zero matrix element $K^{(0,m)}_{i^\prime i}=1$.  
    The arrows indicate evolution by
    $\matr{K}\rightarrow\matr{F} \matr{K}\matr{F}^{-1}$, under which they exhibit cycles of period $2^{\lceil \log_2 (m+1)\rceil}$, as can be seen.
    Each of them are only non-zero about a small diagonal band (non-gray squares) of width given by $D_m = 1+2m$.
    A $\matr{K}^{(k,m)}$ is local if this white band fits inside some diagonal band (Eq~\ref{eq:diagband2}).
    If a $\matr{K}^{(k,m)}$ is local, then it can be seen that under evolution it retains locality (the white band never gets larger), a property which we call consistent locality.
    The main result of this paper is that any consistently local matrix can be written as a linear sum of local $\matr{K}^{(k,m)}$.
    Since there are only a finite number of local $\matr{K}^{(k,m)}$, there are only a finite number of consistently local matrices that can be written, and therefore a finite number of distinct phases in the thermodynamic limit.
    The number will depend on the choice of $(l_x,l_y)$, i.e. how local the model is.
    Notice that consistent locality is non-generic: if we just pick a local matrix by filling in elements along the diagonal band at random, it will generically quickly become highly non-local after a few steps of evolution.
    }
    \label{fig:cycles}
\end{figure*}

We now state the main result of this paper: 
a special choice of basis functions $v^k\mathcal{K}_m(u,v)$ with the property that
 $W(u,v)$ is consistently local if and only if in the unique decomposition
\begin{equation}
    W(u,v) = \sum_{k=0}^{L_x-1} \sum_{m=0}^{L_x-1} C_{k,m} v^k \mathcal{K}_m(u,v)
\end{equation}
where $C_{k,m}\in \mathbb{F}_p$ are constants, each $C_{k,m}v^k \mathcal{K}_m(u,v)$ is itself individually local.  
$\mathcal{K}_m(u,v)$ is given by
\begin{align}
    \mathcal{K}_m(u,v) &= (u-1)^{L_x-1-m} 
    \mathcal{V}_m(v) \\
    \mathcal{V}_m(v) &= \prod_{i=0}^{N_f-1} f_i(v)^{\overline{m}_i}
    \label{eq:mcKdef}
\end{align}
where $f_i(x)$ are the $N_f$ unique irreducible factors of the polynomial $\tilde{f}(x) \equiv  x^{-k_a}f(x)$ appearing $r_i$ times, $\tilde{f}(x) = \prod_{i} f_i(x)^{r_i}$,
and 
$\overline{m}_i = \lfloor m/p^{\alpha_i}\rfloor p^{\alpha_i}$ where $\alpha_i$ is the power of $p$ in the prime decomposition of $r_i$.
The proof of this is rather technical and is delegated to Section~\ref{sec:proof}.
Thus, any phase can simply be constructed by finding all $v^k\mathcal{K}_m(u,v)$ that are local, and choosing their coefficients $C_{k,m}$ freely.

Let us go back to the matrix representation, and define the corresponding matrices $\matr{K}^{(k,m)} \leftrightarrow v^k \mathcal{K}_m(u,v)$, following the same mapping as Eq~\ref{eq:Wpolydef}.
The elements of the matrix $K^{(k,m)}_{i^\prime i}$ are non-zero if and only if $k\leq i^\prime-i \leq k+D_m$, where $D_m$ is the degree of $\mathcal{V}_m(v)$.
$D_m$ increases monotonically with $m$, and is bounded by $D_m\leq m \delta_f $.
This bound is saturated when $v^{-k_a}f(x)$ is a product of irreducible polynomials, each of which appear only once.
Our main result (Theorem~\ref{theorem}) states that \emph{any} consistently local $\matr{W}^{(l_y,0)}$ can be written as a linear sum of local $\matr{K}^{(k,m)}$.
Thus, it is straightforward to enumerate all possible $\matr{W}^{(l_y,0)}$, which is simply all matrices in the subspace of $\mathbb{F}_p^{L_x\times L_x}$ spanned by the set of local $\matr{K}^{(k,m)}$ (note that the full set of $\{\matr{K}^{(k,m)}\}_{km}$ for all $0\leq k,m<L_x$ forms a complete basis for this space).
Figure~\ref{fig:cycles} shows $\matr{K}^{(0,m)}$ for $m=0,1,2,3$ for a specific example, and how they evolve from one row to the next while maintaining locality.

A property of the matrices $\matr{K}^{(k,m)}$ is that they are periodic with period $p^{N_m}$, meaning $K^{(k,m)}_{i+p^{N_m},i^\prime+p^{N_m}}=K^{(k,m)}_{i i^\prime}$, where $N_m \equiv \lceil \log_p (m+1)\rceil$.
They also have cycles of period $p^{N_m}$, meaning $\matr{K}^{(k,m)} = \matr{F}^{p^{N_m}} \matr{K}^{(k,m)} \matr{F}^{-p^{N_m}}$.
Since $D_m$ increases monotonically with $m$,
 only $m$ up to some maximum value, $M$, are local and may be included in $\matr{W}^{(l_y,0)}$.
 We therefore see that $\matr{W}^{(l_y,0)}$ must be periodic with period $p^{N_M}$.  
Thus, locality enforces that the projective representation characterizing the phase, $\matr{W}^{(l_y,0)}$, be $p^{N_M}$-translation invariant!
This is a novel phenomenon that does not appear in, say, subsystem SPTs with line-like symmetries where the projective representation does not have to be periodic (and as a result there are an infinite number of distinct phases in the thermodynamic limit, even with a local model).

How many possible phases may exist for a given $(l_x, l_y)$?
This is given by the number of $\matr{K}^{(k,m)}$ that can fit within a diagonal band of width $\ell\equiv1+2l_x + 4l_y\delta_f$.
For each $m$, $\matr{K}^{(k,m)}$ is local if $0\leq k < \ell-D_m $.
Thus, there are $C_m\equiv \max\{\ell-m D_m,0\}$ possible $k$ values for each $m$.
The total number of local $\matr{K}^{(k,m)}$ is then $\sum_m C_m$.

Consider the case where $f(x) = x^{k_a} f_1(x) f_2(x) \dots$ where each unique irreducible factor $f_i(x)$ only appears once.  
Then, $D_m = m\delta_f$.
The total number of local $\matr{K}^{(k,m)}$ is then
\begin{align}
N_{loc} &= \sum_{m=0}^{\infty} \max\left\{\ell-m \delta_f,0\right\} \\
&= \frac{\delta_f}{2} \lceil C \rceil \left( 2 C - \lceil C \rceil + 1 \right)
\end{align}
where we have just summed $m$ to infinity since $\ell\ll L_x/\delta_f$,  and $C \equiv \ell/\delta_f$. 
Notice that $N_{loc}$ only depends only on the combination $\ell$, and not specifically what $l_x$ and $l_y$ are.
The $\matr{W}^{(l_y,0)}$ describing each phase is therefore a linear sum of these $N_{loc}$ matrices $\matr{K}^{(k,m)}$, and so
the total number of possible phases is $p^{N_{loc}}$.
These phases are in one-to-one correspondence with elements of the group $\mathbb{Z}_p^{N_{loc}}$, and
exhibit the group structure under stacking.
Note that this number is an \emph{upper bound} on the number of possible phases with a given $(l_x,l_y)$.

Consider the example in Figure~\ref{fig:cycles}, which has $f(x)=1+x+x^2$ and $p=2$.
Suppose we were interested in phases that have locality $(l_x,l_y)=(1,0)$, then the matrix $W^{(l_y,0)}_{i^\prime i}$ may only be non-zero if $-1\leq i^\prime-i\leq 1$.  
The only local $\matr{K}^{(k,m)}$ matrices are then $\matr{K}^{(-1,0)}$, $\matr{K}^{(0,0)}$, $\matr{K}^{(1,0)}$, and $\matr{K}^{(-1,1)}$.  
Then, our result (Theorem~\ref{theorem}) states that \emph{all} consistently local $\matr{W}^{(l_y,0)}$ are a linear sum (with binary coefficients) of these four $\matr{K}^{(k,m)}$ matrices.  
There are therefore only $2^{4}$ possible phases, and they all have twist phases that are periodic with a period of $2$ sites (or $1$ if the coefficient of $\matr{K}^{(-1,1)}$ is zero).

\subsection{More fractal symmetries}
Beyond these two cases, we may imagine more general combinations of fractal symmetries of the form
\begin{equation}
    G = \prod_{i=0}^{N-1} \mathbb{Z}_{p}^{(f_i, y^{\eta_i})}
    \label{eq:generalG}
\end{equation}
where we have $N$ different fractals $f_i(x)$, which each have positive time direction $y^{\eta_i}$ given by $\eta_i=\pm 1$.
We again assume none of $f_i(x)$ are monomials.
In this language, the previous case of two fractal symmetries is given by $N=2$ with $f_0(x)=f_1(\bar{x}) = f(x)$, and $\eta_0=-\eta_1=1$.
Note that we could have allowed $p$ to vary among the fractals --- the reason we do not consider this case is that the twist phases between generators of $\mathbb{Z}_p$ and $\mathbb{Z}_{q}$ are $\gcd(p,q)$th roots of unity, but since $p$ and $q$ are both prime, this phase must be trivial.

By an argument similar to that given in Sec~\ref{subsec:localone}, we may show that any twist phase between the two generators of $\mathbb{Z}_p^{(f_i,y^{\eta_i})}$ and $\mathbb{Z}_p^{(f_j,y^{\eta_j})}$ for which $\eta_i=\eta_j$ must be trivial.  
What about when $\eta_i\neq \eta_j$?  
In that case, we can show that there may only exist non-trivial twist phases between them if $f_i(x) = f_j(\bar{x})$.

Suppose we have some matrix $\matr{W}^{(l_y,0)}$ describing twist phases between symmetry generators 
of $\mathbb{Z}_p^{(f_i,y^{\eta_i})}$ and $\mathbb{Z}_p^{(f_j,y^{\eta_j})}$.
Going to a polynomial representation $W^{(l_y,0)}(u,v)$ (as in Eq~\ref{eq:Wpolydef}), the change of basis to a different row is
\begin{equation}
    W^{(l_y+n,n)}(u,v) = f_i(uv)^{n} f_j(\bar{v})^{-n} W^{(l_y,0)}(u,v)
\end{equation}
which must be local for all $n$.
Suppose that $W^{(l_y,0)}(u,v)$ is $p^{k}$-periodic, such that $u^{p^k}W^{(l_y,0)}(u,v) = W^{(l_y,0)}(u,v)$.
Then, 
\begin{equation}
    W^{(l_y+np^k,np^k)}(u,v) = (f_i(v)^{p^k} f_j(\bar{v})^{-p^k})^n W^{(l_y,0)}(u,v)
\end{equation}
should also be local for all $n$ (recall that locality in the polynomial language is a statement about the powers of $v$ present).
This implies that $f_i(v)^{p^k} f_j(\bar{v})^{-p^k} = 1$, or $f_i(v^{p^k}) =  f_j(\bar{v}^{p^k})$.
If $p^k \ll L_x$, then this means that we must have $f_i(x) = f_j(\bar{x})$.
In the case where $p^k\not\ll L_x$, we may simply consider larger system sizes $L_x,L_y\rightarrow p^m L_x, p^m L_y$, but with the same periodicity $p^k$, and come to the same conclusion.
Hence, there can only exist non-trivial twist phases between symmetries with $f_i(x) = f_j(\bar{x})$ and $\eta_i = -\eta_j$.

For the more general group $G$ in Eq~\ref{eq:generalG}, to find all the possible phases with a fixed locality $(l_x,l_y)$, we should simply find all pairs $(i,j)$ where $\eta_i = -\eta_j$ and $f_i(x)=f_j(\bar{x})$, and construct a local $\matr{W}^{(l_y,0)}$ matrix for each such $(i,j)$ pair.
Thus, the case with two fractal symmetries $G=\mathbb{Z}_p^{(f,y)}\times\mathbb{Z}_p^{(\bar{f},\bar{y})}$ already contains all the essential physics.

\section{Constructing commuting models for arbitrary phases}\label{sec:construct}
In this section, we show that it is indeed possible to realize all the phases derived in the previous section for systems with two fractal symmetries, $G=\mathbb{Z}_p^{(f,y)}\times\mathbb{Z}_p^{(\bar f,\bar y)}$, in local Hamiltonians.
We show how to construct a Hamiltonian, composed of mutually commuting local terms, for an arbitrary phase characterized by the matrix $\matr{W}^{(l_y,0)}$.
These Hamiltonians are certainly not the \emph{most} local models that realize each phase, but they are quite conceptually simple and the construction works for any given $\matr{W}^{(l_y,0)}$.

Let us define $\mathbb{Z}_p$ generalizations of the Pauli matrices $X$ and $Z$ satisfying the following algebra,
\begin{align}
    X^N = Z^N = 1\\
    XZ = e^{\frac{2\pi i}{p}}ZX
\end{align}
and may be represented by a $p\times p$ diagonal matrix $Z$ whose diagonals are $p$-th roots of unity, and $X$ as a cyclic raising operator in this basis.

The local Hilbert space on each site $(i,j)$ of the square lattice is taken to be two such $p$-state degrees of freedom, labeled $a$ and $b$, which are operated on by the operators $Z_{ij}^{(\alpha)}$ and $X_{ij}^{(\alpha)}$, for $\alpha\in\{a,b\}$.
Each $Z_{ij}^{(\alpha)}$ only has non-trivial commutations with $X_{ij}^{(\alpha)}$ on the same site and $\alpha$.

Let us also define a vectorial representation of such operators: functions of vectors $\vec{v}\in\mathbb{F}_p^{L_x}$ to operators on the row $j$ as
\begin{align}
    Z_j^{(\alpha)}[\matr{v}] = \prod_{i=0}^{L_x-1} (Z_{ij}^{(\alpha)})^{v_i}\\
    X_j^{(\alpha)}[\matr{v}] = \prod_{i=0}^{L_x-1} (X_{ij}^{(\alpha)})^{v_i}
\end{align}
One can verify that the commutation relations in this representation are
\begin{equation}
X_j^{(\alpha)}[\vec{v}] Z_{j}^{(\alpha)}[\vec{w}] 
=
e^{\frac{2\pi i}{p} \vec{v}^T\vec{w}}
Z_{j}^{(\alpha)}[\vec{w}] X_j^{(\alpha)}[\vec{v}] 
\end{equation}
for two operators on the same row $j$ with the same $\alpha\in\{a,b\}$, and trivial otherwise.

The onsite symmetry group is $G=\mathbb{Z}_p^{(f,y)}\times \mathbb{Z}_p^{(\bar f, \bar y)}$.
Let us label the first $\mathbb{Z}_p$ factor as $a$, and the second as $b$, and let $\mathit{g}^{(a)}$ and $\mathit{g}^{(b)}$ be generators for the two.
Then, we take the onsite representation 
\begin{equation}
    u_{ij} (\mathit{g}^{(\alpha)}) = X^{(\alpha)}_{ij} 
\end{equation}
As always, we take $L_x$ to be a power of $p$, and $L_y$ such that $f(x)^{L_y}=1$.  
The total symmetry group is $G_\mathrm{tot}=\mathbb{Z}_p^{L_x}\times\mathbb{Z}_p^{L_x}$.
An arbitrary element of the first $\mathbb{Z}_p^{L_x}$ factor, with basis defined with respect to row $j_0$, is given by
\begin{equation}
S(\mathit{g}^{(a,j_0)}[\vec{v}]) = \prod_{j=0}^{L_y-1} X^{(a)}_j[\matr{F}^{j-j_0} \vec{v}]
\end{equation}
and of the second by
\begin{equation}
S(\mathit{g}^{(b,j_0)}[\vec{v}]) = \prod_{j=0}^{L_y-1} X^{(b)}_j[(\matr{F}^T)^{j_0-j} \vec{v}]
\end{equation}

Suppose we are given a consistently local matrix $\matr{W}^{(l_y,0)}$ representing the twist phase.
For convenience, let us denote $\matr{W}_j\equiv \matr{W}^{(l_y+j,j)}$.
Recall that consistent locality implies $(\matr{W}_j)_{i^\prime i}$ is only non-zero if $i^\prime-i$ is within some small range, for all $j$.
Then, let us define the operators
\begin{align}
\begin{split}
    A_{ij} &= X^{(a)}_j[\vec{e}_i] 
    Z^{(b)}_{j+l_y}[-\matr{W}_j \vec{e}_i]
    Z^{(b)}_{j+l_y+1}[\matr{F}\matr{W}_j \vec{e}_i]\\
    B_{ij} &= X^{(b)}_{j+l_y}[\vec{e}_i] 
    Z^{(a)}_{j}[-\matr{W}_j^T \vec{e}_i]
    Z^{(a)}_{j-1}[\matr{F}^T\matr{W}_j^T \vec{e}_i]
    \end{split}
\end{align}
Notice that these are local operators, as $\matr{W}_j$ is consistently local.
Consider the Hamiltonian
\begin{equation}
    H = -\sum_{ij} A_{ij} - \sum_{ij} B_{ij}.
\end{equation}
which we will now show is symmetric, composed of mutually commuting terms, and has a unique ground state which realizes the desired twist phase $\matr{W}_j$.

First, let us show that $A_{ij}$ and $B_{ij}$ commute with all $S(g^{(a,j_0)}[\vec{v}])$ and $S(g^{(b,j_0)}[\vec{v}])$.
Note that $A_{ij}$ commutes with all $a$ type symmetries, and $B_{ij}$ commutes with all $b$ type symmetries trivially.
To see that $A_{ij}$ commutes with $S(g^{(b,j_0)}[\vec{v}])$, note that the phase factor obtained by commuting the symmetry through the $Z^{(b)}_{j+l_y}$ term exactly cancelled out by the phase from the $Z^{(b)}_{j+l_y+1}$ term.
In the same way, it can be shown the $B_{ij}$ commutes with all $S(g^{(a,j_0)}[\vec{v}])$.
Thus, $H$ is symmetric.

Next, we verify that all terms are mutually commuting.
One can verify that $A_{ij}$ and $B_{i^\prime j}$ with the same $j$ commutes, as the phases from commuting each component individually cancels.
For $A_{ij}$ and $B_{i^\prime,j+1}$, one finds that $A_{ij} B_{i^\prime,j+1} = \alpha B_{i^\prime,j+1}A_{ij}$, where
$\alpha=e^{\frac{2\pi i}{p} (\vec{e}_i^T \matr{F}^T\matr{W}_{j+1}^T \vec{e}_{i^\prime} - \vec{e}_i^T \matr{W}_j^T \matr{F}^T \vec{e}_{i^\prime})}=1$
using the fact that $\matr{W}_{j+1}^T = (\matr{F}^{-1})^T \matr{W}_j^T \matr{F}^T$.
All other terms commute trivially.
Therefore, this Hamiltonian is composed of mutually commuting terms.
The set $\{A_{ij}\}\cup\{B_{ij}\}$ may therefore be thought of as generators of a stabilizer group,  
and the ground state is given by the unique state $\ket{\psi}$ that is a simultaneous eigenstate of all operators, $A_{ij}\ket{\psi}=B_{ij}\ket{\psi}=\ket{\psi}$.
Uniqueness of the ground state follows from the fact that all $A_{ij}$ and $B_{ij}$ are all independent operators, which can be seen simply from the fact that $A_{ij}$ is the only operator which contains $X_{ij}^{(a)}$, and $B_{i,j-l_y}$ is the only operator which contains $X_{ij}^{(b)}$ (all other operators act as $Z_{ij}^{(\alpha)}$ or identity on the site $ij$).

Let show that the ground state is uncharged under all symmetries: $S(g^{(\alpha,j_0)}[\vec{v}])\ket{\psi} = \ket{\psi}$.
We do this by showing that every symmetry operation can be written as a product of terms $A_{ij}$ and $B_{ij}$ in the Hamiltonian.
Let us define a vectorial representation for $A_{ij}$ and $B_{ij}$,
\begin{align}
\begin{split}
    A_{j}[\vec{v}] &= 
    \prod_i A_{ij}^{v_i} = 
    X^{(a)}_j[\vec{v}] 
    Z^{(b)}_{j+l_y}[-\matr{W}_j \vec{v}]
    Z^{(b)}_{j+l_y+1}[\matr{F}\matr{W}_j \vec{v}]\\
    B_{j}[\vec{v}] &= 
    \prod_i B_{ij}^{v_i} = 
    X^{(b)}_{j+l_y}[\vec{v}] 
    Z^{(a)}_{j}[-\matr{W}_j^T \vec{v}]
    Z^{(a)}_{j-1}[\matr{F}^T\matr{W}_j^T \vec{v}]
    \end{split}
\end{align}
and note that
\begin{align}
\begin{split}
&\prod_{j} A_j[\matr{F}^{j-j_0}\vec{v}] \\
=&
S(g^{(a,j_0)}[\vec{v}])
\prod_j
Z^{(b)}_{j+l_y}[-\matr{W}_j \matr{F}^{j-j_0}\vec{v}]
Z^{(b)}_{j+l_y+1}[\matr{F}\matr{W}_j\matr{F}^{j-j_0}\vec{v}]\\
=&
S(g^{(a,j_0)}[\vec{v}])\\
&\times
\left[
\prod_j
Z^{(b)}_{j+l_y}[-\matr{W}_j \matr{F}^{j-j_0}\vec{v}]
\right]
\left[
\prod_j
Z^{(b)}_{j+l_y}[\matr{W}_{j}\matr{F}^{j-j_0}\vec{v}]
\right]\\
=&
S(g^{(a,j_0)}[\vec{v}])
\end{split}
\end{align}
where we have again used the evolution equation $\matr{W}_j = \matr{F}\matr{W}_{j-1}\matr{F}^{-1}$.
Similarly, we may show that
\begin{equation}
    \prod_j B_j[(\matr{F}^T)^{j-l_y-j_0} \vec{v}] = S(g^{(b,j_0)}[\vec{v}])
\end{equation}
Thus, all symmetries may be written as a product of $A_{ij}$ and $B_{ij}$, so therefore the ground state $\ket{\psi}$ has eigenvalue $+1$ under all symmetries.

Next, let us measure the twist phases to verify that this model indeed describes the desired phase.
Consider twisting by the symmetry $g^{(a,0)}[\vec{e}_i]$.
Let us conjugate every term in the Hamiltonian crossing the $j=0$ cut by the truncated symmetry operator $S_\geq (g^{(a,0)}[\vec{e}_i])$.
The only terms which are affected by this conjugation are $B_{i^\prime,0}$ for which $\vec{e}_{i}^T \matr{W}_0^T \vec{e}_{i^\prime} = W^{(l_y,0)}_{i^\prime i} \neq 0$, which are transformed as
\begin{equation}
    B_{i^\prime,0}\rightarrow B_{i^\prime,0}^\prime = e^{-\frac{2\pi i}{p} W^{(l_y,0)}_{i^\prime i}}B_{i^\prime,0}
\end{equation}
on the zeroeth row, and $B_{i^\prime,j}^\prime =B_{i^\prime,j}$ on all other $j\neq 0$,
in the twisted Hamiltonian.
Now, we are curious about the charge of a symmetry $g^{(b,l_y)}[\vec{e}_{i^\prime}]$ in the ground state of this twisted Hamiltonian, which acts as
\begin{align}
    S(g^{(b,l_y)}[\vec{e}_{i^\prime}]) &= \prod_j B_j[(\matr{F}^T)^{j} \vec{e}_{i^\prime}]\\
     &= 
     e^{\frac{2\pi i}{p} W^{(l_y,0)}_{i^\prime i}}
     \prod_j B_j^\prime[(\matr{F}^T)^{j} \vec{e}_{i^\prime}]\\
\end{align}
since the symmetry only includes a single $B_{i^\prime,0}$ on the zeroeth row.
Thus, the ground state of the twisted Hamiltonian (which has eigenvalue $1$ under $B_{ij}^\prime$), has picked up a nontrivial charge under the symmetry $S(g^{(b,l_y)}[\vec{e}_{i^\prime}])$, relative to in the untwisted Hamiltonian.
Indeed, this phase is
\begin{equation}
    \Omega(g^{(b,l_y)}_{i^\prime},g^{(a,0)}_{i}) = e^{\frac{2\pi i}{p} W^{(l_y,0)}_{i^\prime i}}
\end{equation}
which is exactly as desired.
Thus, this model indeed realizes the correct projective representation and describes the desired phase of matter.

Note that these models bear resemblance to the cluster state, and can be understood as a $\mathbb{Z}_p$ version of the cluster state on a particular bipartite graph.  
Suppose we have a graph defined by the symmetric $\mathbb{Z}_p$-valued adjacency matrix $Adj({r,r^\prime})\in\mathbb{Z}_p$, where $r,r^\prime$ label two particular sites.  
Then, the Hamiltonian of a generalized cluster state on this graph is given by
\begin{equation}
H_{clus} = \sum_{\vec{r}} U X_r U^\dagger
\end{equation}
where $U=\prod_{r r^\prime} (\mathrm{CZ}_{r r^\prime})^{Adj({r,r^\prime})}$, and
$\mathrm{CZ}_{r r^\prime}$ is a generalized controlled-Z (CZ) operator on the bond connecting sites $r$ and $r^\prime$.
We define the $\mathbb{Z}_p$ generalization of the CZ operator by $\mathrm{CZ}_{r r^\prime}\ket{z_r z_{r^\prime}} = e^{\frac{2\pi i}{p} z_r z_{r^\prime}}\ket{z_r z_{r^\prime}}$, where $\ket{z_r z_{r^\prime}}$ is the eigenstate of $Z_r$ and $Z_{r^\prime}$ with eigenvalues $e^{\frac{2\pi i z_r}{p}}$ and $e^{\frac{2\pi i z_{r^\prime}}{p}}$ respectively, such that $\mathrm{CZ}_{r r^\prime} X_r \mathrm{CZ}_{r r^\prime}^{\dagger} = X_r Z_{r^\prime}$.
Let us label a site by $r=(i,j,\alpha)$, its $xy$-coordinate and its sublattice index $\alpha\in\{a,b\}$.
Then, the graph relevant to this model is given by the adjacency matrix
\begin{align}
\begin{split}
Adj((i,j,a),(i^\prime,j^\prime,b)) &= 
\begin{cases}
(-\matr{W}_j)_{i^\prime i} & \mathrm{if\;} j^\prime = j+l_y\\
(\matr{F} \matr{W}_j)_{i^\prime i} &  \mathrm{if\;} j^\prime = j+l_y+1\\
0 & \mathrm{else}
\end{cases}\\
&=Adj((i^\prime,j^\prime,b),(i,j,a)) 
\end{split}
\label{eq:adj}
\end{align}
Hence, one can think of each site $(i,j,a)$ as being connected to sites $(i^\prime,j+l_y,b)$  by the adjacency matrix given by $-\matr{W}_j$, and also sites $(i^\prime,j+l_y+1,b)$ via $\matr{F}\matr{W}_j$.
Generically, this graph will be complicated and non-planar.

\section{Irreversibility and Pseudosymmetries}\label{sec:pseudosym}
In this section, we discuss fractal symmetries described by a non-reversible linear cellular automaton (for which fractal SPTs do still exist~\cite{Kubica2018-dp,Devakul2018-ru}), or even originally reversible cellular automata that become irreversible when put on different system sizes (e.g. $L_x$ or $L_y$ that are not powers of $p$).

Fractal symmetries are drastically affected by the total system size.  
For example, consider the Sierpinski fractal SPT~\cite{Devakul2018-ru}, which is
 generated by a non-reversible $f(x)=1+x$ with $p=2$, on a lattice of size $L_x=L_y=2^N$.
In this scenario, there are no non-trivial symmetries at all!  
The total symmetry group $G_\mathrm{tot}=\mathbb{Z}_1$ is simply the trivial group.
Yet, we can still define large operators that in the bulk look like symmetries (i.e. they obey the local cellular automaton rules), but violate the rules only within some boundary region.
The total symmetry group being trivial may be thought of as simply an incommensurability effect, whereby the space-time trajectory of the CA cannot form any closed cycles.
Thus, there is still a sense in which this model obeys a symmetry locally.
This effect is exemplified when one notices that, upon opening boundary conditions, there are no obstacles in defining fractal symmetries and non-trivial SPTs.
In this way, it should still be possible to extract what the SPT phase ``would have been'' if the CA rules had been reversible or if the total system sizes had been chosen more appropriately such the total symmetry group had been non-trivial.
To generalize away from the fixed point and to an actual phase, we must formulate what it means for a perturbation to be ``symmetric'' in a system with a potentially trivial total symmetry group.
We will say that such a model is symmetric under a \emph{pseudosymmetry}, as a symmetry of the full system may not even exist.
Thus, a system may respect a pseudosymmetry, and be in a non-trivial pseudosymmetry protected topological phase (pseudo-SPT), despite not having any actual symmetries!

Let us define what we mean when we say that a system is symmetric under a fractal pseudo-symmetry.
Let us work in the case of two fractal symmetries, so $G=\mathbb{Z}_p^{(f,y)}\times \mathbb{Z}_p^{(\bar{f},\bar{y})}$.
As always, we may decompose the Hamiltonian into a sum of local terms
\begin{equation}
    H = \sum_{ij} H_{ij}
\end{equation}
where each $H_{ij}$ has support within some bounded box.
Suppose $H_{ij}$ has support only on sites with $(x,y)$ coordinates $i_0\leq x \leq i_1$ and $j_0\leq y \leq j_1$, where $\ell_x\equiv i_1-i_0$ and $\ell_y\equiv j_1-j_0$ are of order $1$.
Then, we say that $H_{ij}$ is symmetric under the fractal pseudo-symmetry if it commutes with every
\begin{align}
\begin{split}
    \widetilde{S}_i^{(a,j_0,j_1)} &= \prod_{j=j_0}^{j_1} u_j[g^{(a)}; \matr{F}^{j-j_0}\vec{e}_i]
    \,;\,i_0-k_b \ell_y \leq i \leq i_1-k_a \ell_y
    \\
    \widetilde{S}_i^{(b,j_1,j_0)} &= \prod_{j=j_0}^{j_1} u_j[g^{(b)}; (\matr{F}^T)^{j_1-j}\vec{e}_i]
    \,;\,i_0+k_a \ell_y \leq i \leq i_1+k_b \ell_y
    \end{split}
    \label{eq:pseudosym}
\end{align}
which is enough to replicate how any fractal symmetry would act on this $l_x\times l_y$ square, if they existed for the total system.
Notice that these only involve positive powers of $\matr{F}$, as we do not assume an inverse exists.
Thus, even in the extreme case where $G_\mathrm{tot}$ is trivial, a Hamiltonian may still be symmetric under the fractal pseudo-symmetry $G$ in this way.
In the opposite extreme case where $f(x)$ is reversible and $G_\mathrm{tot}=G^{L_x}$ (as was the topic of the rest of this paper), $H_{ij}$ commuting with all pseudo-symmetries is equivalent to it commuting with all the fractal symmetries in $G_\mathrm{tot}$.
Thus, it is natural to expect that pseudo-symmetries may also protect non-trivial SPT phases.

Indeed, notice that one can perform a twist of a pseudo-symmetry.  
Given a cut, $j_0$, we may use the operator $\widetilde{S}^{(a,j_0,j_0+M)}_i$, for some $1\ll M \ll L_x$, in place of the truncated symmetry operator $S_{\geq}(g^{(a,j_0)}_i)$ from Sec~\ref{sec:localphases}.
This can be used to obtain a twisted Hamiltonian as before by conjugating each term 
\begin{equation}
    H_{ij}\rightarrow \widetilde{S}^{(a,j_0,j_0+R)}_i H_{ij}
    (\widetilde{S}^{(a,j_0,j_0+R)}_i)^\dagger
\end{equation}
for some $1\ll R \ll L_x$ if $H_{ij}$ crosses $j_0$.
Each $H_{ij}$ commuting with all their respective pseudo-symmetries (Eq~\ref{eq:pseudosym}) means that the only terms which may no commute with $\tilde{S}_i^{(a,j_0,j_0+R)}$ are those near $(i,j_0)$ and those at the far-away row $j_0+R$ which are not affected by the twisting process.

Measuring the charge of a pseudo-symmetry is a trickier process, since there is no ``symmetry operator'' which we can measure the charge of in the ground state.
Hence, the charge of a pseudo-symmetry is not so well defined.  
However, we may still measure the charge relative to what it would be in the ground state on the untwisted Hamiltonian, $\ket{\psi}$, which turns out to be well defined.
One approach is to again express the twisted process as the action of some local unitary near $(i,j_0)$, $H_\mathrm{twist}\ket{\psi} = U H U^\dagger\ket{\psi}$, where as before $\mathcal{A}(U)$ is contained within some $(2l_x+1)\times(2l_y+1)$ box ($\mathcal{A}(U)$ is defined in Eq~\ref{eq:mcAdef}).
If the support of $U$ were entirely in this box, then we could measure the change in the charge of a $b$ type symmetry by
\begin{equation}
    \Omega^{(j_0+l_y,j_0)}_{i^\prime i} = \bra{\psi} S^\dagger U^\dagger S U\ket{\psi}
\end{equation}
where $S=\widetilde{S}^{(b,j_0+l_y,j_0)}_{i^\prime}$, and $\ket{\psi}$ is the ground state of $H$ (for convenience we have suppressed the dependence of $U$ and $S$ on $i$, $i^\prime$, etc).
However, if the support of $U$ is not confined to this box, this expectation value may not yield a pure phase.  
One solution is to use a family of larger pseudo-symmetry operators which act the same way within $\mathcal{A}(U)$, and take the limit of the sizes going to infinity.
For example, using $S(n)$ instead of $S$ in the above, defined as
\begin{align}
    S(n) \equiv \widetilde{S}_{i^\prime + k_a p^n}^{(b,j_0+l_y+p^n, j_0-p^n)}
\end{align}
which is shown in Figure~\ref{fig:pseudosymcharge}.
For large $n$ and $i^\prime$ within
\begin{equation}
    -l_x+k_a l_y \leq i^\prime-i\leq l_x + k_b l_y,
\end{equation} this operator acts in the same way as $S$ within $\mathcal{A}(U)$, but is also a valid pseudo-symmetry operator elsewhere as well, except on rows $j_0+l_y+p^n$ and $j_0-p^n$ which are far away, as shown in Figure~\ref{fig:pseudosymcharge}.
Then, we may define 
\begin{equation}
    \Omega^{(j_0+l_y,j_0)}_{i^\prime i} \equiv \lim_{n\rightarrow\infty}
    \bra{\psi} S(n)^\dagger U^\dagger S(n) U\ket{\psi}
    \label{eq:susu}
\end{equation}
which, in the large $n$ limit (while keeping $p^n\ll L_x$), is a pure phase.
In the case where $G_\mathrm{tot}=G^{L_x}$, this will coincide with the twist phases
\begin{equation}
    \Omega^{(j_0+l_y,j_0)}_{i^\prime i} = \Omega(g^{(b,j_0+l_y)}_{i^\prime}, g^{(a,j_0)}_{i})
\end{equation}
discussed earlier.

\begin{figure}
    \centering
    \includegraphics[width=0.5\textwidth]{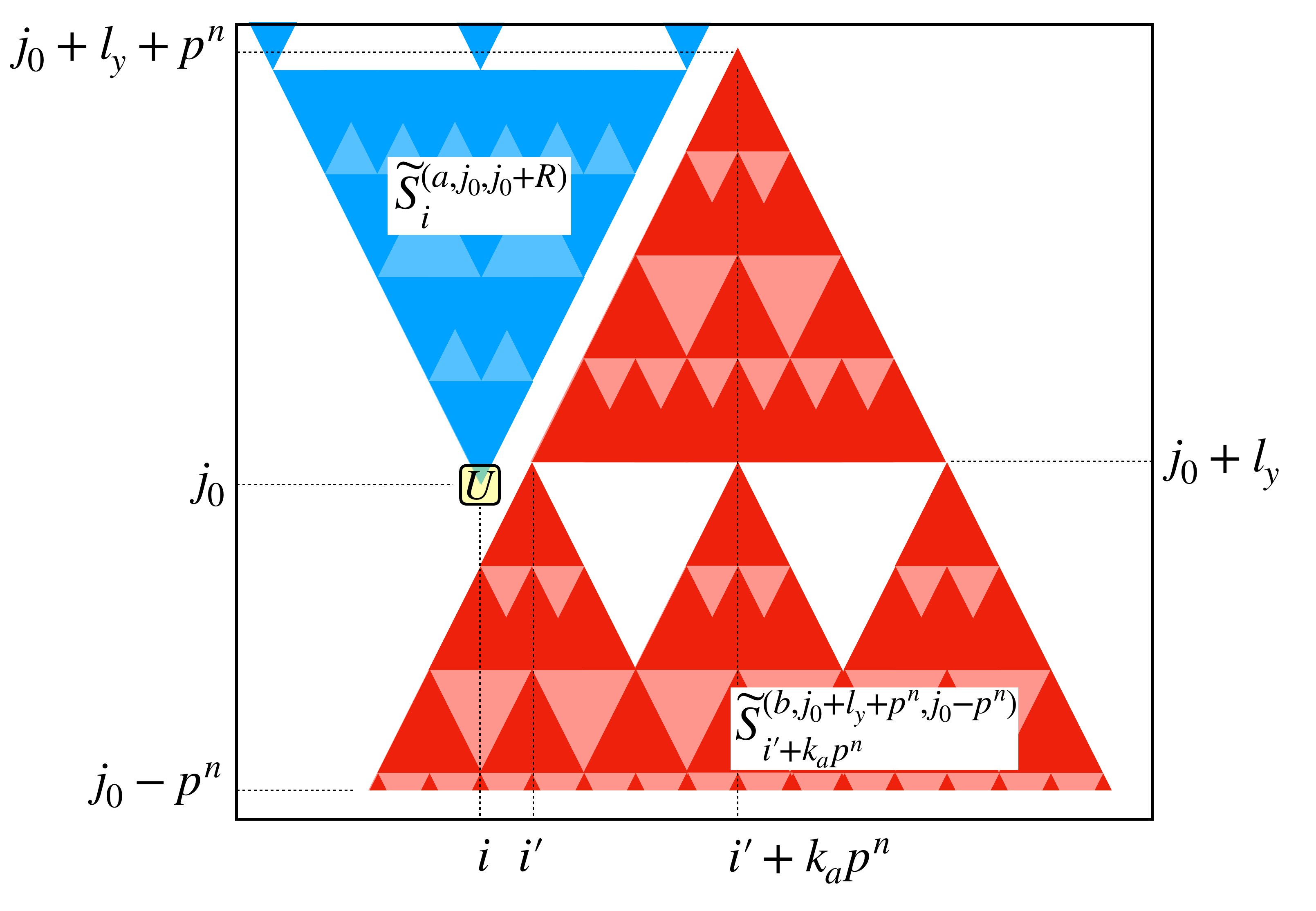}%
    \caption{A visualization of the process to defining a twist phase for pseudosymmetries $\widetilde{S}$.
    Twisting with respect to $\widetilde{S}^{(a,j_0,j_0+R)}_i$ can be thought of as acting via a unitary $U$, which has some region $\mathcal{A}(U)$ shown as the yellow square.  
    To measure the change in charge of another symmetry, we take the expectation value of the commutator (Eq~\ref{eq:susu}) of $U$ with $\widetilde{S}^{(b,j_0+l_y+p^n,j_0-p^n)}_{i^\prime+k_ap^n}$, as shown, for large $n$.
    This may be non-trivial for small $|i^\prime-i|$ when they overlap, and is the sign of a non-trivial pseudo-SPT.
    }
    \label{fig:pseudosymcharge}
\end{figure}

However, the key ingredient to showing that this pseudo-SPT phase is truly stable to local pseudo-symmetric perturbations is to show that $\Omega_{i^\prime i}^{(j_1+l_y,j_1)}$ for all $j_1$ is completely determined by its value at $j_0$.
Define (like before) the matrix $\matr{W}^{(j_1,j_0)}\in\mathbb{F}_p^{L_x\times L_x}$ by
\begin{equation}
    \Omega_{i^\prime i}^{(j_1,j_0)} = e^{\frac{2\pi i}{p} W^{(j_1,j_0)}_{i^\prime i}}
\end{equation}
Starting with $j_0=0$, the matrix $\matr{W}^{(l_y,0)}$ would normally be evolved to $\matr{W}^{(l_y+1,1)}$ using Eq~\ref{eq:Wevolvemat2}.
However, in this case there is no inverse $\matr{F}^{-1}$ which we can use.
Instead, we have the relation
\begin{equation}
    \matr{W}^{(l_y+1,0)} = \matr{W}^{(l_y+1,1)}\matr{F}\label{eq:wwf}
\end{equation}
which does not uniquely determine
$\matr{W}^{(l_y+1,1)}$, as we may add $\vec{v}^T$ to any row of $\matr{W}^{(l_y+1,1)}$, for $\vec{v}\in \ker(\matr{F}^T)$.
However, it is easy to show that any $\vec{v}\in \ker({\matr{F}^T})\backslash\{\vec{0}\}$ is highly non-local, by which we mean that there are no integers $a$ and $b$ for which $v_{i}$ is only non-zero for $a\leq i \leq b$, and $0\leq b-a \ll L_x$ (essentially, any non-zero vector $\vec{v}$ for which $\matr{F}^T\vec{v}=0$ needs to be exploiting the periodic boundary conditions).
Thus, adding any non-zero vector $\vec{v}\in\ker(F^T)$ to a row of a local $\matr{W}^{(l_y+1,1)}$ will necessarily make it non-local.
Thus, \emph{if there exist} a local matrix $\matr{W}^{(l_y+1,1)}$ satisfying Eq~\ref{eq:wwf}, then it is the only local one.
The matrices $\matr{K}^{(k,m)}$ can be defined even for irreversible $f(x)$.
Therefore, for a matrix $\matr{W}^{(l_y,0)}$ composed of a sum of local $\matr{K}^{(k,m)}$, a local $\matr{W}^{(l_y+1,1)}$ does exist and is unique.
This can be reiterated to uniquely determine the set of local $\matr{W}^{(j_0+l_y,j_0)}$ for all $j_0$, assuming it is commensurate with the system size.

Thus, we have shown that $\Omega^{(j_0+l_y,j_0)}_{i^\prime i}$ is indeed a global invariant (knowing it for one $j_0$ determines it for all $j_0$). 
It therefore cannot be changed via a local pseudo-symmetry respecting perturbation (or equivalently a pseudo-symmetry respecting local unitary circuit), and such a phase can indeed be thought of as a non-trivial pseudo-SPT.

To define $\matr{K}^{(k,m)}$ for cases where $f(x)$ may not be reversible, we may simply note that each solution is $p^{N_m}$-periodic in both directions.
Thus, it is straightforward to generalize $\matr{K}^{(k,m)}$ for $m$ where $p^{N_m}$ divides $L_x$ and $L_y$.
In the coming proof, we are careful to show that $f(x)$ is only ever divided out of polynomials of finite degree in $x$ which contained $f(x)$ as a factor anyway, so polynomial division by $f(x)$ is remainder-less and results in another polynomial.
Thus, the results apply equally well for non-reversible $f(x)$, as long as $L_x$ and $L_y$ are commensurate with the periodicity.
This commensurability requirement may greatly reduce the number of possible pseudo-SPT phases, for example, if $L_x$ or $L_y$ are coprime to $p$, then only $m=0$ is allowed.
Note that on such system sizes is also possible to have periodicity that is not a power of $p$ in non-generic cases, for example, the special case where $f(x) = g(x^\nu)$ is a function of only $x^\nu$ and $\nu$ is not a power of $p$.

\section{Identifying the phase}
Suppose we are given an unknown system with $G=\mathbb{Z}_p^{(f,y)}\times\mathbb{Z}_p^{(\bar{f},\bar{y})}$, how do we determine what phase it belongs to and how do we convey compactly what phase it is in?  
Recall that for the case with line-like subsystem symmetries (the topic of Ref~\onlinecite{Devakul2018-ru}), to describe a specific phase required information growing with system size, and so a modified phase equivalence relation was introduced to deal with this.
Such a modified phase equivalence was not needed in this case, and we will show that indeed a specific local phase may be described with a finite amount of information.
Suppose we are given an unknown Hamiltonian $H$.
It is possible to compute the full set of twist phases and construct the $L_x\times L_x$ matrix $\matr{W}^{(0,0)}$.
If the only non-zero matrix elements of $\matr{W}^{(0,0)}$ are within some diagonal band, then we are set.
Otherwise, find the smallest integer $l_y\geq0$ for which $\matr{W}^{(l_y,0)} = \matr{F}^{l_y} \matr{W}^{(0,0)}$ is only non-zero within a diagonal band of width $\ell\sim\mathcal{O}(1)$. 
This is guaranteed to be the case for some $l_y$ (also of $\mathcal{O}(1)$) due to locality.
Note that $\ell$ and $l_y$ are independent of which row we call the zeroeth row.
From the fact that $\matr{W}^{(j_0+l_y,j_0)}$ must also be non-zero only within this diagonal band for all $j_0$, our main result (Theorem~\ref{theorem}) states that $\matr{W}^{(l_y,0)}$ must be a sum
\begin{equation}
    \matr{W}^{(l_y,0)} = \sum_{(k,m)\in loc} C_{km} \matr{K}^{(k,m)}
\end{equation}
where $C_{km}\in\mathbb{F}_p$, and $loc$ is the finite set of all pairs $(k,m)$ where $\matr{K}^{(k,m)}$ is also only non-zero within the same band of width $\ell$.
Thus, this phase is specified fully by our choice of origin, $l_y$, and the finite set of non-zero $C_{km}$.
Furthermore, this description does not depend on $L_x$ and $L_y$, and so it makes sense to say whether two systems of different sizes belong to the same phase.
However, note that unlike with ordinary phases, the choice of origin is important here.  
This procedure may also be done in cases where the symmetry is irreversible, 
the matrix $\matr{W}^{(l_y,0)}$ will instead be defined from the pseudo-symmetry twist phases $\Omega^{(l_y,0)}_{i^\prime,i}$.

\section{Discussion}\label{sec:discussion}
We have therefore asked and answered the question of
what SPT phases can exist protected by fractal symmetries for the type $G=\mathbb{Z}_p^{(f,y)}$, $G=\mathbb{Z}_p^{(f,y)}\times \mathbb{Z}_p^{(\bar{f},\bar{y})}$, or combinations thereof.
If we completely ignore locality along the $x$ direction, effectively compactifying our system into a quasi-1D cylinder with global symmetry group $G_\mathrm{tot}=G^{L_x}$, we would have found that the possible phases are classified by $\mathcal{H}^2[G^{L_x},U(1)]$ which is infinitely large as $L_x\rightarrow\infty$.
What we have shown, however, is that the vast majority of these phases require highly non-local correlations that cannot arise from a local Hamiltonian.
In the case of $G=\mathbb{Z}_p^{(f,y)}$, locality disqualifies all but the trivial phase.
In the $G=\mathbb{Z}_p^{(f,y)}\times\mathbb{Z}_p^{(\bar{f},\bar{y})}$ case, there exists multiple non-trivial phases that are allowed.
If we restrict the twist phases to be local up to some degree, $(l_x,l_y)$, then there are only a finite number of possible phases, independent of total system size $L_x$ and $L_y$.
The number of phases depends only on the combination $\ell\equiv 1+2l_x+4l_y\delta_f$, which is linear in both $l_x$ and $l_y$ (thus demonstrating a kind of holographic principle).
For more general combinations of such fractal symmetries, we have shown that the classification of phases simply amounts to finding pairs of fractal symmetries of the form $(\mathbb{Z}_p^{(f,y)},\mathbb{Z}_p^{(\bar{f},\bar{y})})$ and repeating the analysis above.

Where do other previously discovered 2D systems with fractal symmetries fall into our picture?
The quantum Newman-Moore paramagnet~\cite{Newman1999-fq} is described by the Hamiltonian
\begin{equation}
    H_{NM} = -\sum_{ij} Z_{ij} Z_{i,j+1} Z_{i-1,j-1} - h\sum_{ij} X_{ij}
\end{equation}
$X_{ij}$, $Z_{ij}$, are Pauli matrices acting on the qubit degree of freedom on site $(i,j)$.
The symmetry in our notation is given by $G=\mathbb{Z}_2^{(f,y)}$ with $f(x)=1+x$ (which is irreversible).  
$H_{NM}$ has a phase transition from a spontaneously symmetry-broken phase at $|h|<1$ to a trivial symmetric paramagnet at $|h|>1$.  
Our results would imply that there can be no non-trivial SPT phase in this system.
Thus, all the possibilities in this model are different patterns of broken symmetry.
Next, we have the explicit example of the 2D Sierpinski fractal SPT~\cite{Devakul2018-ru,Devakul2018-di} (which appeared at a gapped boundary in Ref~\onlinecite{Kubica2018-dp}).
This model is isomorphic to the cluster model on the honeycomb lattice, and is described by symmetries $G=\mathbb{Z}_2^{(f,y)}\times\mathbb{Z}_2^{(\bar{f},\bar{y})}$ with $f(x)=1+x$.
With proper choice of unit cell, the Hamiltonian is given by
\begin{align}
\begin{split}
    H_{clus} &= -\sum_{ij} X_{ij}^{(b)} Z_{ij}^{(a)} Z_{i,j-1}^{(a)} Z_{i-1,j-1}^{(a)}\\
    & -\sum_{ij} X_{ij}^{(a)} Z_{ij}^{(b)} Z_{i,j+1}^{(b)} Z_{i+1,j+1}^{(b)}
    \end{split}
\end{align}
Notice that $f(x)=1+x$ with $p=2$ is irreversible for all system sizes, thus these phases should be viewed as pseudo-SPT phases (and indeed every term commutes with all the pseudo-symmetries).
Computing the pseudo-SPT twist phases for $H_{clus}$ gives $\Omega^{(j,j)}_{i^\prime i} = (-1)^{\delta_{i^\prime i}}$.  
Thus, we have simply $\matr{W}^{(0,0)} =\mathbb{1}= \matr{K}^{(0,0)}$.  
A translation invariant model must simply be a sum of $\matr{K}^{(k,0)}$ and this is indeed the case here.
The family of $2$D fractal SPT models described in Ref~\onlinecite{Devakul2018-ru} all realize $\matr{W}^{(0,0)} = \mathbb{1}$.
Our results here imply the existence of a number of new local phases for which the Hamiltonian and twist phases are not strictly translation invariant with period $1$.
Sec~\ref{sec:construct} gave a construction of such models, which works even when $f(x)$ is not reversible.
The twist phases for these models may be translation invariant with a minimal period of $2^n$ sites along either $x$ or $y$, but in exchange will also require interactions of range $\mathcal{O}(2^n)$.

\begin{figure*}
    \centering
    \includegraphics[width=\textwidth]{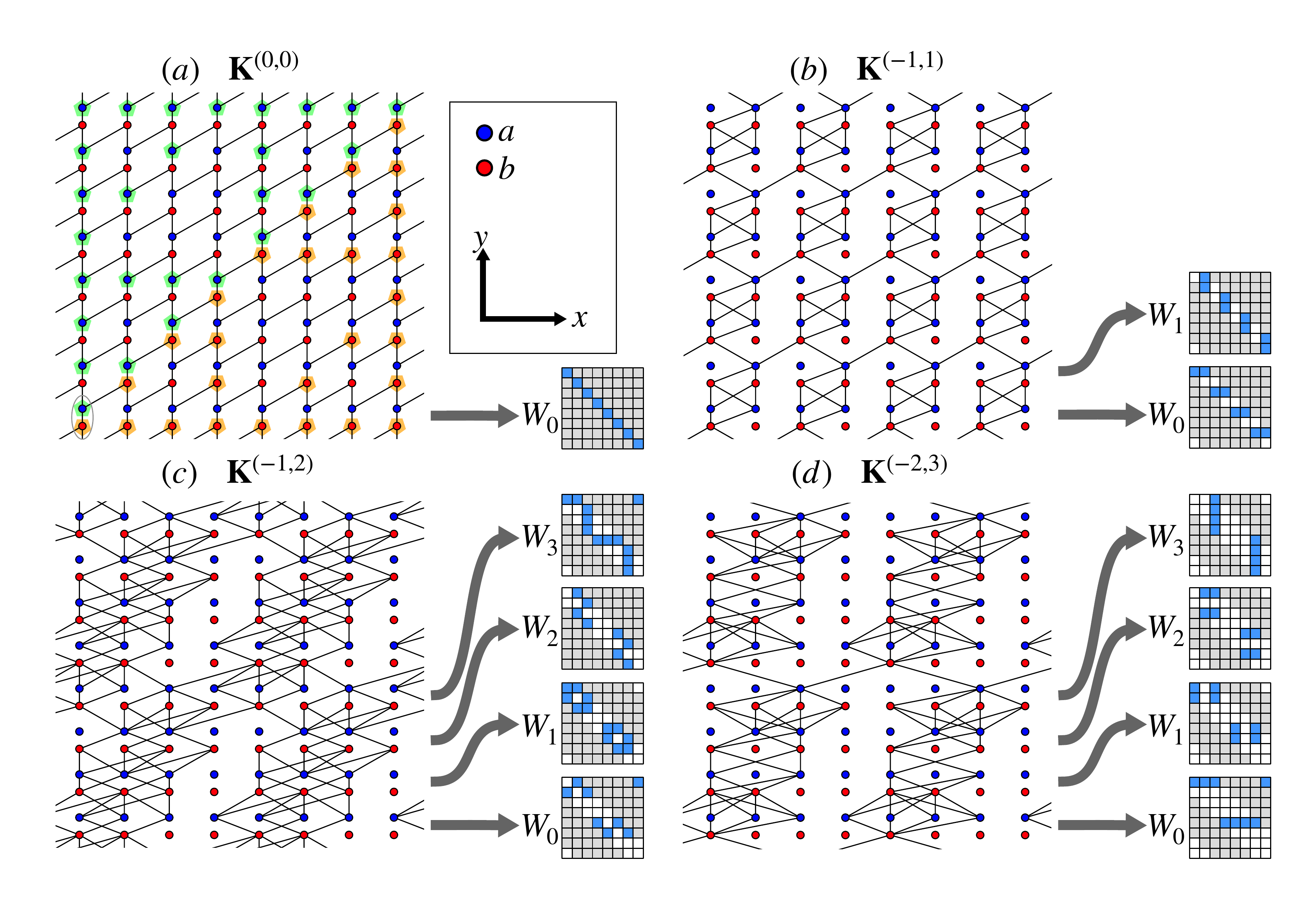}%
    \caption{
    Explicit examples of some possible phases for the case of $G=\mathbb{Z}_2^{(f,y)}\times\mathbb{Z}_2^{(\bar{f},\bar{y})}$ with $f(x)=1+x$: the Sierpinski triangle symmetry.
    The models are constructed following the procedure of Sec~\ref{sec:construct}, and are simply cluster models defined on some underlying graph.
    The models all have $l_y=0$ and $\matr{W}_0=\matr{W}^{(0,0)}$ given by $(a)$ $\matr{K}^{(0,0)}$, $(b)$ $\matr{K}^{(-1,1)}$, $(c)$ $\matr{K}^{(-1,2)}$, and $(d)$ $\matr{K}^{(-2,3)}$.
    Each site (gray circle in $(a)$) consists of an $a$ and a $b$ qubit, which are represented by blue and red vertices.
    Example of fractal subsystems on which the symmetries act are also shown in $(a)$: green highlighted vertices for the $a$ type subsystems, and orange for the $b$.
    The reader may verify that $\prod X$ on these subsystems is indeed a symmetry of the cluster model defined on all these graphs.
    The case $(a)$ is simply the previously studied honeycomb lattice cluster model, which is translation invariant.
    The other three are previously undiscovered phases, and are only translation invariant with a period of $(b)$ $2$ or $(c,d)$ $4$ along the $x$ and $y$ directions.
    The graphs for phases with $\matr{K}^{(k,m)}$ for $k$ other than the ones chosen here may be obtained simply by shifting each blue $a$ vertex left/right by a number of sites, while maintaining connectivity of the graph.
    For each case,
    we also show one cycle of the matrices $\matr{W}_j=\matr{F}^j \matr{W}_0 \matr{F}^{-j}$, the matrix characterizing the twist phases for symmetries defined w.r.t. to row $j$, presented in the same manner as in Fig~\ref{fig:cycles}.  
    The lower-leftmost $a$ and $b$ vertices of each graph are defined to be at coordinate $(x,y)=(0,0)$.
    Although we have only shown examples on an $8\times 8$ torus, these may be tiled onto any commensurate system size.
    }
    \label{fig:otherphases}
\end{figure*}
We show explicitly in Fig~\ref{fig:otherphases} a few of these additional phases that were previously undiscovered, which are represented as cluster models on various graphs.
Recall that the usual $\mathbb{Z}_2$ cluster model for on an arbitrary graph is simply given by the Hamiltonian 
\begin{equation}
H_{clus} = -\sum_{v} X_v\prod_{v^\prime\in \mathrm{adj}(v)} Z_{v^\prime},
\end{equation}
 where the sum is over vertices $v$, and $\mathrm{adj}(v)$ is the set of vertices connected to $v$ by an edge.

A signature of subsystem SPT phases is an extensive protected ground state degeneracy along the edge.
That is, for an edge of length $L_{edge}$, there is a ground state degeneracy scaling as $\log \mathrm{GSD} \sim L_{edge}$ which cannot be lifted without breaking the subsystem symmetries.
The dimension of the protected subspace may be thought of as the minimum dimension needed to realize the projective representation characterizing the phase on the boundary.
For the case of the honeycomb lattice cluster model (Fig~\ref{fig:otherphases}a), we have exactly $\mathrm{GSD}=2^{L_{edge}}$.
For the more complicated models, some of this degeneracy may be lifted, leaving only a fraction $\mathrm{GSD}=2^{\alpha L_{edge}}$ remaining.  
Moreover, the degeneracy along the left or right edges will also generally be different.

\section{Proof of result} \label{sec:proof}

In this section, we will focus on proving the claim in Sec~\ref{subsec:localtwo} that any consistently local matrix $\matr{W}$ must be a linear sum of $\matr{K}^{(k,m)}$, each of which are local.  
We will say that the set of matrices satisfying this property, $\{\matr{K}^{(k,m)}\}$, serve as an \emph{optimal basis} (this term will be precisely defined soon).
Recall that we are dealing with the case where $L_x=p^N$ is a power of $p$ and $L_y$ is chosen such that $f(x)^{L_y}=1$.
We will simply use $L$ to refer to $L_x$ in this section for convenience.

\subsection{Definition and statement}
We will be using the polynomial representation exclusively.  
Let $W(u,v)$ be a Laurent polynomial over $\mathbb{F}_p$ in $u$ and $v$ representing the twist phases, defined according to Eq~\ref{eq:Wpolydef}. 
Formally, periodic boundary conditions $u^{L}=v^{L}=1$ means that we only care about 
the \emph{equivalence class} of $W(u,v)$ in $\mathbb{F}_p[u,v]/\langle u^{L}-1,v^{L}-1\rangle$, 
where $\langle u^{L}-1,v^{L}-1 \rangle$ is the ideal generated by these two polynomials.
Rather than dealing with equivalence classes, we will instead deal with canonical form polynomials:
a polynomial $q(u,v)$ is in \emph{canonical form} if $\du{q(u,v)}<L$ and $\dv{q(u,v)}<L$.
Obviously, canonical form polynomials are in one-to-one correspondence with equivalence classes from $\mathbb{F}_p[u,v]/\langle u^{L}-1,v^{L}-1\rangle$.
Any polynomial with $u$ or $v$-degree larger than $L$ can be brought into canonical form via simply taking $u^a = u^{a\mod L}$ and $v^a = v^{a\mod L}$.
From now on, we will implicitly assume all polynomials have been brought to their canonical form.


Let us now define what it means for a polynomial to be local.
\begin{defn}
A Laurent polynomial $g(u,v)$ is $(a,b)$-local, for integers $a\leq b$, if
\begin{equation}
    \dv v^{-a} g(u,v)  \leq b-a
\end{equation}
\end{defn}
A Laurent polynomial $g(u,v)$ being $(a,b)$-local roughly means that the only non-zero powers of $v$ are $v^a$, $v^{a+1}$, \dots, $v^b$ (powers mod $L$).
As a shorthand, we will more often say that a polynomial is $\ell$-local to mean $(0,\ell)$-local, which can be thought of as simply an upper bound on its $v$ degree.
Whenever something is said to be $\ell$-local, we are usually talking about $\ell\ll L$ being some finite value of order $1$.
Some nice properties are that if $g(u,v)$ is $(a,b)$-local, then 
\begin{enumerate}
    \item   $v^k g(u,v)$ is $(a+k,b+k)$-local;
    \item $g(u,v)g^\prime(u,v)$ is $(a+a^\prime, b+b^\prime)$-local, if $g^\prime(u,v)$ is $(a^\prime,b^\prime)$-local.
    \item $g(u,v)^N$ is $(Na, Nb)$-local;
\end{enumerate}

Next, let us define the ``evolution operator'' $t_h$ with respect to an (invertible) Laurent polynomial $h(x)$ which operates on a polynomial $W(u,v)$ as
\begin{align}
    t_h : W(u,v)\rightarrow t_h(u,v) W(u,v);\\ 
    t_h(u,v) = h(v)^{-1}h(uv)
\end{align}
By invertible, we mean that there exists an inverse $h(v)^{-1}$ with periodic boundary conditions, such that $h(v) h(v)^{-1} = 1$.
In the case of $h(x)=f(x)$, $t_f$ evolves the polynomial $W^{(l_y,0)}(u,v)\rightarrow W^{(l_y+1,1)}(u,v)$.
Notice that an overall shift, $h(x)\rightarrow x^a h(x)$, results in $t_h(u,v)\rightarrow u^{a}t_h(u,v)$, which does not affect the locality properties (which only depends on powers of $v$).
For the purposes of this proof we will therefore simply work with (non-Laurent) polynomials $h(x)$.
We can now define consistent locality.

\begin{defn}
A Laurent polynomial $g(u,v)$ is \emph{consistently $(a,b)$-local under $t_h$} if
$t_h^{n} g(u,v)$ is $(a,b)$-local for all $n$.
\end{defn}

Physically, the twist phases $W^{(l_y,0)}(u,v)$ must be consistently $(-l_x+2l_yk_a,l_x+2l_yk_b)$-local (from Eq~\ref{eq:diagband2}) under $t_f$ in order to correspond to a physical phase obtained from a local Hamiltonian.

Let us define 
\begin{equation}
    U_m(u) = (u-1)^{L-1-m}
\end{equation}
for $m=0\dots L-1$, which serves as a complete basis for all polynomials $g(u)\in\mathbb{F}_p[u]$ with degree $\du g(u) \leq L$.
Any polynomial $W(u,v)$ may therefore be uniquely expanded as 
\begin{equation}
    W(u,v) = \sum_{m=0}^{L-1} U_m(u) W_m(v)
    \label{eq:wuw}
\end{equation}
which we take to be our definition of $W_m(v)$.
Since $U_m(u)$ are all independent, if $W(u,v)$ is $\ell$-local, each $W_m(v)$ must also be $\ell$-local.
\begin{defn}
    A set of polynomials $\{V_m(v)\}$ indexed by  $m=0,\dots,L-1$ is said to \emph{generate an optimal basis} for $t_h$ if for every $\ell$-local $W(u,v)$, $W(u,v)$ is consistently $\ell$-local if and only if $V_m(v)\mid W_m(v)$ for all $m$.
    The basis set $\{v^k U_m(u) V_m(v)\}$ is then called an \emph{optimal basis}.
\end{defn}
When we say $V_m(v)\mid W_m(v)$, we mean that $V_m(v)$ divides $W_m(v)$ as polynomials in $\mathbb{F}_p[v]$ without periodic boundary conditions, i.e.
there exists a polynomial $q(v)$ such that 
$W_m(v) = q(v) V_m(v)$ and
\begin{equation}
    \dv q(v) = \dv W_m(v) - \dv V_m(v) \leq \ell-\dv V_m(v)
\end{equation}
which follows by addition of degrees, and since $W_m(v)$ is $\ell$-local.

Suppose $\{V_m(v)\}$ generates an optimal basis for $t_h$.  
Assuming $V_m(v)$ is invertible, $\{v^k U_m(u) V_m(v)\}_{k,m}$ for $0\leq k,m<L$ generates a complete basis for canonical form polynomials.
This basis is optimal with respect to $t_h$ in the sense that all consistently $\ell$-local polynomials under $t_h$ may be written as a linear sum of $\ell$-local basis elements.
If there are only a finite number $N_{loc}$ of $\ell$-local basis elements (as will be the case), then there are also only a finite number $p^{N_{loc}}$ of consistently $\ell$-local $W(u,v)$.

We can now restate our main theorem, 
the proof of which will be the remainder of this section.
\begin{theorem}
The polynomials $\mathcal{V}_m(u,v)$, defined in Eq~\ref{eq:mcKdef}, generate an optimal basis for $t_f$.
\label{theorem}
\end{theorem}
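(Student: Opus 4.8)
The plan is to work entirely in the polynomial picture and to exploit that, since $L=p^N$, one has $u^L-1=(u-1)^L$, so after the substitution $w=u-1$ we are just computing in $\mathbb{F}_p[v,w]/\langle v^L-1,\,w^L\rangle$ with $U_m(u)=w^{L-1-m}$. (As the paper notes, we may take $f$ to be an honest polynomial with $f(0)\neq 0$.) First I would record the action of $t_f$ in this basis: expanding $f(uv)=f\big(v(1+w)\big)=\sum_{k\ge 0}w^k f_{[k]}(v)$ with $f_{[k]}(v)=v^k\,(D^{(k)}f)(v)$, the $k$-th Hasse derivative, shows $t_f$ is multiplication by $\phi(v,w):=f(v(1+w))/f(v)$, that $N:=t_f-1$ strictly raises the power of $w$ (so $(NW)_m$ depends only on $W_{m+1},W_{m+2},\dots$), and hence $N^L=0$. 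The freshman's dream applied to the two‑variable polynomial $f(v(1+w))$ gives $t_f^{p^s}=$ multiplication by $\phi(v^{p^s},w^{p^s})$. A first reduction I would make: since $t_f^n=(1+N)^n=\sum_{j<L}\binom{n}{j}N^j$ and the matrix $\big(\binom{n}{j}\bmod p\big)_{0\le n,j<L}$ is unit‑triangular, hence invertible over $\mathbb{F}_p$, the condition ``$t_f^nW$ is $\ell$-local for all $n$'' is equivalent to ``$N^jW$ is $\ell$-local for all $j\ge 0$'', and this is the form of consistent locality I would use.

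For the sufficiency direction, let $\mathcal{L}$ be the $\mathbb{F}_p$-span of $\{v^kU_m(u)\mathcal{V}_m(v)\}$, equivalently $\{W:\mathcal{V}_m(v)\mid W_m(v)\ \forall m\}$ in the expansion \eqref{eq:wuw}. I would show $N(\mathcal{L})\subseteq\mathcal{L}$ and that $N$ sends $\ell$-local elements of $\mathcal{L}$ to $\ell$-local elements; iterating then yields consistent $\ell$-locality of every $\ell$-local $W\in\mathcal{L}$, and this space contains all the $\ell$-local basis elements. Using $(NW)_m=f(v)^{-1}\sum_{k\ge 1}f_{[k]}(v)W_{m+k}(v)$, everything collapses to the single divisibility
\[
 f(v)\,\mathcal{V}_m(v)\ \Big|\ f_{[k]}(v)\,\mathcal{V}_{m+k}(v)\qquad(k\ge 0,\ m\ge 0),
\]
together with $\deg f_{[k]}\le\deg f$ (which handles the locality bookkeeping). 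The divisibility decouples over the irreducible factors $f_i$ and follows from (a) superadditivity $\lfloor(m+k)/q\rfloor\ge\lfloor m/q\rfloor+\lfloor k/q\rfloor$, and (b) the Hasse‑derivative bound $v_{f_i}\!\big(D^{(k)}f\big)\ge r_i-\lfloor k/p^{\alpha_i}\rfloor p^{\alpha_i}$, where $v_{f_i}$ is the $f_i$-adic valuation. I would prove (b) from the generalized Leibniz rule for Hasse derivatives plus the key vanishing ``$D^{(j)}(h^{p^{e}})=0$ unless $p^{e}\mid j$'', which is immediate from $h^{p^e}(x)=h(x^{p^e})$ and Lucas' theorem. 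The exponents $\overline m_i=\lfloor m/p^{\alpha_i}\rfloor p^{\alpha_i}$ are tuned precisely so this inequality closes.

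The necessity direction is where I expect the real difficulty. I want: if $N^jW$ is $\ell$-local for all $j$ then $\mathcal{V}_m\mid W_m$ for all $m$. Each $N^jW$ is then in particular an honest polynomial, i.e. the division by $f(v)^{j}$ in $N^j=$ mult.\ by $[f(v(1+w))-f(v)]^j/f(v)^j$ goes through; this is a divisibility condition that again decouples over the $f_i$, so I may fix $h=f_i$ and prove $v_h(W_m)\ge\overline m_i$. I would argue by downward induction on $m$: having classified $W_{m+1},\dots,W_{L-1}$, one has $(N^{m}W)_{0}=g_1(v)^{m}W_m(v)+\big(N^{m}(\mathrm{tail})\big)_0$, where $g_1=f_{[1]}/f=vf'/f$, the tail $\sum_{M>m}U_MW_M$ lies in $\mathcal{L}$ by the inductive hypothesis, and $\big(N^{m}(\mathrm{tail})\big)_0$ is therefore already a polynomial by the sufficiency direction; since $(N^mW)_0$ is a polynomial, so is $g_1^mW_m$. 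For factors with $\alpha_i=0$ this forces $f(v)^{m}\mid (f'(v))^{m}W_m(v)$, hence $v_h(W_m)\ge m=\overline m_i$. For factors with $\alpha_i\ge 1$ the first Hasse derivative is too divisible, and the constraint must instead be routed through the off‑diagonal coefficients of $(\phi-1)^j$ carrying the higher $f_{[k]}$ with $p^{\alpha_i}\mid k$; the cleanest route is a reduction to the $\alpha_i=0$ case via the change of variable $v\mapsto v^{p^{\alpha_i}}$ coming from $h^{p^{\alpha_i}}(x)=h(x^{p^{\alpha_i}})$, which effectively replaces $m$ by $\lfloor m/p^{\alpha_i}\rfloor$ and reproduces $\overline m_i$. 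The delicate point — the main obstacle — is controlling the valuations of all the cross terms $c^{(j)}_{e}(v)W_M(v)$ so that exactly the wanted term isolates $W_m$, i.e.\ showing the bounds in (b) are tight and that no cancellation among cross terms can rescue a $W$ with $\mathcal{V}_m\nmid W_m$; I would handle this by the downward induction on $m$ (disposing of the higher components through sufficiency) together with an inner induction that raises the guaranteed $f_i$-adic valuation of $W_m$ in steps of $p^{\alpha_i}$, choosing the auxiliary exponent $n$ at each stage so the irrelevant Hasse derivatives are killed. With both inclusions established, $\{v^kU_m(u)\mathcal{V}_m(v)\}$ is an optimal basis, and the count $p^{N_{loc}}$ of consistently $\ell$-local $W$ follows at once from counting the $\ell$-local $\mathcal{K}^{(k,m)}$.
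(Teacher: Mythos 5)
Your sufficiency direction is correct and is organized quite differently from the paper. The reduction from ``$t_f^n$ local for all $n$'' to ``$N^j$ local for all $j$'' via the unit-triangular Pascal matrix, the identity $(NW)_m=f(v)^{-1}\sum_{k\ge1}f_{[k]}(v)W_{m+k}(v)$ with $f_{[k]}=v^kD^{(k)}f$, and the Hasse-derivative valuation bound $v_{f_i}(D^{(k)}f)\ge r_i-\lfloor k/p^{\alpha_i}\rfloor p^{\alpha_i}$ are all sound; combined with superadditivity of $\lfloor\cdot/q\rfloor$ they give the divisibility $f(v)\mathcal V_m(v)\mid f_{[k]}(v)\mathcal V_{m+k}(v)$, and $\deg f_{[k]}\le\deg f$ then closes the $\ell$-locality bookkeeping. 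The paper never touches Hasse derivatives: it proves the result first for $h$ irreducible (Lemma~\ref{lemma:irreducible}, by extracting the $U_{M-1}$ coefficient of $\Delta_hW$), then lifts to $h^r$ (Lemma~\ref{lemma:power}, a block decomposition in residues mod $p^\alpha$ enabled by $t_{h^r}(u,v)=t_h(u^r,v^r)$), and then glues distinct irreducibles (Lemma~\ref{lemma:combine}). Your direct filtration by powers of $N$ on the full $f$ is a genuinely different route, and for sufficiency it buys a short, uniform proof.

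The genuine gap is the necessity direction for factors with $\alpha_i\ge1$, exactly where you flag difficulty. For $\alpha_i=0$ your argument via $(N^mW)_0=g_1^mW_m+(\text{known})$ and $v_{f_i}(f_{[1]})=r_i-1$ does work. But when $\alpha_i\ge1$ one has $(f_i^{r_i})'=0$, so $v_{f_i}(f_{[1]})\ge r_i$ and $g_1^mW_m$ gives a vacuous constraint on $v_{f_i}(W_m)$. Your proposed fix — ``reduce to the $\alpha_i=0$ case via $v\mapsto v^{p^{\alpha_i}}$'' — is precisely the mechanism of Lemma~\ref{lemma:power}, but it works there only because $f$ has already been replaced by a pure power $h^r$: only then is $t_f(u,v)$ a polynomial in $u^{p^{\alpha_i}},v^{p^{\alpha_i}}$, so that the expansion splits into $p^{2\alpha_i}$ independent blocks. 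For a general $f$ the other irreducible factors $f_j$ with $\alpha_j\ne\alpha_i$ are not polynomials in $v^{p^{\alpha_i}}$, the block split fails, and you are left having to isolate the all-$k_i=q$ term of $(N^jW)_{m-jq}$ from the other compositions of $jq$ while controlling possible cancellations among the cross terms $\prod f_{[k_i]}W_{m'+|\vec k|}$. Your ``inner induction on the $f_i$-adic valuation, choosing the auxiliary exponent so the irrelevant Hasse derivatives are killed'' is precisely the unfilled estimate, and I do not see how to carry it out without first reducing to $f=h^r$ and then gluing distinct irreducibles by an argument of the Lemma~\ref{lemma:combine} type — which is the structural step your sketch is missing.
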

\subsection{Proof}

Let us first list some relevant properties of $U_m(u)$.
\begin{enumerate}
    \item $(u-1)^nU_m(u) = U_{m-n}(u)$ for $n\leq m$, or $0$ for $n>m$\label{uprop1}
    \item $U_m(u)$ is $p^{N_m}$-periodic, meaning
    \begin{equation}
        u^{p^{N_m}} U_m(u) = U_m(u)
    \end{equation}
    where $N_m \equiv \lceil \log_p(m+1)\rceil$.
    This follows from the fact that 
    \begin{equation}
        (u^{p^{N_m}}-1)U_m(u) = 
        (u-1)^{p^{N_m}}U_m(u) = 0
    \end{equation}
    since $p^{N_m}>m$, due to property $1$.
    \label{uprop2}
    \item $U_m(u)$ is also cyclic under evolution by $t_h$ with period dividing $p^{N_m}$, $t_h^{p^{N_m}} U_0(u) = U_0(u)$.
    This follows from the fact that $u^{p^{N_m}} U_0(u) = U_0(u)$, and so 
    \begin{align}
    \begin{split}
    t_h^{p^{N_m}} U_0(u) &=  h(v)^{-p^{N_m}}h(uv)^{p^{N_m}} U_0(u) \\
    &=h(v)^{-p^{N_m}}h(v)^{p^{N_m}} U_0(u) = U_0(u)
    \end{split}
    \end{align}
    \label{uprop3}
    \item $t_h U_m(u) = U_m(u) + \sum_{m^\prime < m} q_{m^\prime}(v) U_{m^\prime}(u)$.
    Under evolution by $t_h$, $t_h U_m(u)$ is given by simply $U_m(u)$, plus a linear combination of $U_{m^\prime}(u)$ for $m^\prime<m$.
    \label{uprop4}
    
\end{enumerate}
Using property~\ref{uprop1},
It is therefore easy to extract each component $W_m(v)$ in the expansion of Eq~\ref{eq:wuw} directly from $W(u,v)$ in a straightforward way.
Suppose the largest $m$ for which $W_m(v)\neq0$ is $m=M$.  
Then, $(u-1)^{M} W(u,v) = W_{M}(v) U_0(u)$ gives only the $m=M$th component multiplying $U_0(u)$.
Then, we may take $W^\prime(u,v) \equiv W(u,v)-U_{M}(u)V_{M}(v)$, which has largest $m$ given by $M^\prime < M$.
This process can be repeated on $W^\prime(u,v)$ to fully obtain $W_m(v)$ for all $m$.
From property~\ref{uprop2}, we also find that $W(u,v)$ is actually $p^{N_m}$-periodic.  

Property $4$ is the most important property (and what makes $U_m(u)$ a special basis for this problem).
It follows from Property~\ref{uprop3} for $m=0$, $t_h U_0(u) = U_0(u)$, and the fact that the $m$th component of $t_h U_m(u)$ is obtained by
\begin{equation}
    (u-1)^m t_h(u,v) U_m(u) =t_h(u,v) U_0(u) = U_0(u)
\end{equation}
which remains unchanged under evolution by $t_h$.
Thus, supposing the expansion of $W(u,v)$ has some largest $m$ value $m=M$, then
defining $\Delta_h W(u,v)$ according to 
\begin{equation}
    t_h W(u,v) = W(u,v) + \Delta_h W(u,v)
\end{equation}
we must have that $\Delta_h W(u,v)$ has a largest $m$ value $m<M$.  
Alternatively, $(u-1)^M \Delta_h W(u,v) = 0$.
This fact will be used numerous times as it allows for a proof by recursion in $M$ in many cases.

Let us first prove two minor Lemmas.
\begin{lemma}
    Suppose $\{V_m(v)\}$ generates an optimal basis for some $t_h$.  
    Then, $V_{m}(v)\mid V_{m^\prime}(v)$ for all $m^\prime \geq m$ and $V_0(v)=1$.
    \label{lemma:optimalgen}
\end{lemma}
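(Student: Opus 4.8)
The plan is to exploit both directions of the equivalence in the definition of optimal basis, feeding it carefully chosen test polynomials; the only structural facts about the $U_m(u)$ that I need are Property~\ref{uprop3} (which gives $t_h U_0(u)=U_0(u)$) and Property~\ref{uprop1} (which gives $(u-1)^{m'-m}U_{m'}(u)=U_m(u)$ whenever $m\le m'$).

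First I would dispose of $V_0(v)=1$. Take the test polynomial $W(u,v)=U_0(u)$. It contains no $v$, hence is $0$-local, and by Property~\ref{uprop3} it is fixed by $t_h$ and therefore consistently $0$-local. In the expansion of Eq~\ref{eq:wuw} it reads $W_0(v)=1$ with all higher components zero, so the implication from consistent locality to divisibility gives $V_0(v)\mid 1$; rescaling, $V_0(v)=1$.

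Next, for $V_m(v)\mid V_{m'}(v)$ with $m\le m'$, I would fix such a pair, set $\ell=\deg_v V_{m'}(v)$, and argue in three moves. (i) Apply the implication from divisibility to consistent locality to $W(u,v)=U_{m'}(u)V_{m'}(v)$: its only nonzero component in Eq~\ref{eq:wuw} is $W_{m'}(v)=V_{m'}(v)$, so every required relation $V_{m''}(v)\mid W_{m''}(v)$ holds trivially, and hence $W$ is consistently $\ell$-local. (ii) Observe that multiplication by the $u$-only polynomial $(u-1)^{m'-m}$ commutes with $t_h$ (both are multiplications in $\mathbb{F}_p[u,v]/\langle u^L-1,v^L-1\rangle$) and carries $\ell$-local polynomials to $\ell$-local polynomials, so $t_h^n\big((u-1)^{m'-m}W\big)=(u-1)^{m'-m}\,t_h^n W$ is $\ell$-local for all $n$; thus $(u-1)^{m'-m}W$ is again consistently $\ell$-local, and by Property~\ref{uprop1} this polynomial is exactly $U_m(u)V_{m'}(v)$. (iii) Apply the implication from consistent locality to divisibility once more, now to $U_m(u)V_{m'}(v)$, whose $m$-th component is $V_{m'}(v)$, to conclude $V_m(v)\mid V_{m'}(v)$.

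I expect the only real subtlety to be conceptual, in move (i): one asserts that $U_{m'}(u)V_{m'}(v)$ is consistently local even though $t_h(u,v)=h(v)^{-1}h(uv)$ carries the wildly non-local factor $h(v)^{-1}$, and the instinct is to check by hand that this factor divides out cleanly against $V_{m'}(v)$. The point of the argument is that one need not see this cancellation: the optimal-basis hypothesis delivers it, and every subsequent step is a formal manipulation of commuting multiplications. The one routine bookkeeping item to settle along the way is that the $V_m(v)$ in an optimal basis may be taken nonzero (otherwise $\deg_v V_{m'}$ and the divisibility assertions are vacuous), which is automatic for the $\mathcal{V}_m(v)$ of Eq~\ref{eq:mcKdef} and is harmless in general.
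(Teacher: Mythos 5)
Your proof is correct and follows essentially the same path as the paper's: both parts use the fixed-point property $t_h U_0(u)=U_0(u)$ for $V_0(v)=1$, and both exploit that multiplication by $(u-1)^{m'-m}$ commutes with $t_h$ and preserves $\ell$-locality to transfer a divisibility constraint from component $m'$ down to component $m$. Your version is somewhat more explicit than the paper's in that you pin down the specific test polynomial $W(u,v)=U_{m'}(u)V_{m'}(v)$ (whose consistent locality you extract from the converse direction of the optimal-basis definition), where the paper leaves the choice of witness implicit behind the phrase ``for this to always be satisfied.''
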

\begin{proof}
First, any $\ell$-local $W(u,v)$ that contains only an $m=0$ component,
$W(u,v)=U_0(u)W_0(v)$,
is trivially also consistently $\ell$-local under any $t_h$, since $t_h U_0(u)=U_0(u)$.
Thus, it must be that $V_0(v)=1$.
Next, if $W(u,v)$ is consistently $\ell$-local, then  
\begin{equation}
    (u-1)^n W(u,v) = \sum_{m=n}^{L-1} U_{m-n}(u) W_m(v)
\end{equation}
must also be consistently $\ell$-local for any $n\geq 0$.
However, this implies that $V_m(v)\mid W_{m+n}(v)$.  
But all we know is that $V_{m+n}\mid W_{m+n}(v)$.
For this to always be satisfied, we must therefore have that $V_m(v)\mid V_{m^\prime}(v)$ for all $m^\prime \geq m$.
\end{proof}

\begin{lemma}
Let $W(u,v)$ be $\ell$-local.  Then, $W(u,v)$ is consistently $\ell$-local under $t_h$ if and only if $\Delta_h W(u,v)$ is also consistently $\ell$-local.
\label{lemma:deltalocal}
\end{lemma}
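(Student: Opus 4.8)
The plan is to exploit the fact that $t_h$ acts simply by multiplication by the Laurent polynomial $t_h(u,v)=h(v)^{-1}h(uv)$, so that $\Delta_h W(u,v)=\bigl(t_h(u,v)-1\bigr)W(u,v)$. In particular $t_h$ and $\Delta_h$ commute (both are multiplications in the polynomial ring), and since $h$ is invertible $t_h$ is invertible, so $t_h^{m}\Delta_h W = \Delta_h t_h^{m} W = t_h^{m+1}W - t_h^{m}W$ for every $m$. The second, elementary, ingredient is that $\ell$-locality is a \emph{linear} condition on canonical-form polynomials---it merely asserts that only the powers $v^0,\dots,v^\ell$ appear---so any finite $\mathbb{F}_p$-linear combination of $\ell$-local polynomials is again $\ell$-local, and reduction modulo $u^{L}-1,v^{L}-1$ never raises the $v$-degree of something already of $v$-degree below $L$ (recall $\ell\ll L$). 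Both directions of the lemma then follow by bookkeeping.

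For the forward implication I would assume $W$ consistently $\ell$-local, i.e. $t_h^{n}W$ is $\ell$-local for all $n$, and simply note that for each $m$ the polynomial $t_h^{m}\Delta_h W = t_h^{m+1}W - t_h^{m}W$ is a difference of two $\ell$-local polynomials, hence $\ell$-local; so $\Delta_h W$ is consistently $\ell$-local. For the converse I would assume $\Delta_h W$ consistently $\ell$-local and use the telescoping identities $t_h^{n}W = W + \sum_{k=0}^{n-1} t_h^{k}\Delta_h W$ for $n\ge 0$ and $t_h^{-n}W = W - \sum_{k=1}^{n} t_h^{-k}\Delta_h W$ for $n\ge 1$, both of which follow immediately from $t_h W = W+\Delta_h W$ by induction. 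In either case $t_h^{n}W$ is $W$ plus a finite sum of $t_h$-iterates of $\Delta_h W$; each iterate is $\ell$-local by hypothesis and $W$ itself is $\ell$-local by assumption, so $t_h^{n}W$ is $\ell$-local, i.e. $W$ is consistently $\ell$-local.

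I do not expect a genuine obstacle here: the statement is a direct consequence of the linearity of $t_h$ together with the telescoping relation, and the only points deserving a sentence of care are the treatment of negative powers of $t_h$ (legitimate because $t_h$ is invertible, or alternatively because $t_f^{L_y}=1$ in the case of interest) and the remark that working modulo $u^{L}=v^{L}=1$ does not disturb $\ell$-locality since $\ell$ is $\mathcal{O}(1)$. The actual content of this lemma lies not in its difficulty but in its use: combined with Property~\ref{uprop4} (which says $\Delta_h W$ has strictly smaller top $U_m$-index than $W$), it reduces the question of consistent locality of a polynomial of top index $M$ to that of one of top index $<M$, which is precisely the recursion that will drive the proof of Theorem~\ref{theorem}.
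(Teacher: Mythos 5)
Your proof is correct and follows essentially the same telescoping argument as the paper: the forward direction observes $t_h^m\Delta_h W = t_h^{m+1}W - t_h^m W$ is a difference of $\ell$-local polynomials, and the reverse uses $t_h^n W = W + \sum_{k=0}^{n-1}t_h^k\Delta_h W$. The only cosmetic difference is that the paper phrases the reverse direction as a contrapositive, while you argue it directly; both are immediate from the linearity of $\ell$-locality.
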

\begin{proof}
Consider evolving $W(u,v)$,
\begin{align}
t_h W(u,v) &= W + \Delta_h W\\
t_h^2 W(u,v) &= W + \Delta_h W + t_h \Delta W\\
t_h^3 W(u,v) &= W + \Delta_h W + t_h \Delta_h W + t_h^2 \Delta_h W
\end{align}
and so on.  
By definition, if $W(u,v)$ is consistently $\ell$-local, then $t_h^n W(u,v)$ must all be $\ell$-local.  
But then, this means that each term added in increasing $n$, $t_h^{n-1} \Delta_h W(u,v)$, must also be $\ell$-local, meaning that
 $\Delta_h W(u,v)$ is therefore consistently $\ell$-local.  
If $W(u,v)$ is not consistently $\ell$-local, then that means that there must be some $n$ such that $t_h^n\Delta W(u,v)$ is not $\ell$-local, which therefore implies that $\Delta_h W(u,v)$ is also not consistently $\ell$-local.
\end{proof}

The next Lemma gives a family of a consistently $\ell$-local polynomials.
\begin{lemma}
    Let $K^h_m(u,v)=U_m(u) h(v)^m$ for some $0\leq m<L$.
    Then, $\dv t_h^n W(u,v) = m\delta_h$ for all $n$.
    It is therefore consistently $m\delta_h$-local under $t_h$.
    \label{lemma:kloc}
\end{lemma}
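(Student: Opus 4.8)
The plan is to compute $t_h^{\,n}K^h_m(u,v)$ explicitly in the basis $\{U_m(u)\}$ and simply read off its $v$-degree. Since $t_h^{\,n}$ acts by multiplication by $h(v)^{-n}h(uv)^n$ and $h(uv)^n$ is a polynomial in the single combination $uv$ — writing $h(x)^n=\sum_a g_a x^a$ with $g_{n\delta_h}=h_{\delta_h}^{\,n}\neq0$ (here $\delta_h\equiv\deg h$), we have $h(uv)^n=\sum_a g_a u^a v^a$ — everything reduces to understanding how the monomials $u^a$ act on $U_m(u)=(u-1)^{L-1-m}$. Expanding $u^a=\bigl(1+(u-1)\bigr)^a$ about $u=1$ and using Property~\ref{uprop1} (that $(u-1)^iU_m(u)=U_{m-i}(u)$ for $i\le m$ and $0$ for $i>m$), one obtains
\[
h(uv)^n\,U_m(u)=\sum_{i=0}^{m} p^{(n)}_i(v)\,U_{m-i}(u),\qquad p^{(n)}_0(v)=h(v)^n,
\]
where $p^{(n)}_i(v)$ is the coefficient of $(u-1)^i$ in the expansion of $h(uv)^n$ about $u=1$. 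Hence
\[
t_h^{\,n}K^h_m(u,v)=h(v)^m\,U_m(u)+\sum_{i=1}^{m}\Bigl(h(v)^{m-n}p^{(n)}_i(v)\Bigr)U_{m-i}(u).
\]

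\textbf{The crux.}
The one step that needs real work — and what I expect to be the main obstacle — is showing that each $h(v)^{m-n}p^{(n)}_i(v)$ is a genuine polynomial of $v$-degree at most $m\delta_h$, so that applying $h(v)^{-n}$ produces no negative or large powers of $v$. For $n\le m$ this is immediate by counting degrees, since $\deg_v p^{(n)}_i\le n\delta_h$ and $h(v)^{m-n}$ is an honest polynomial. For $n>m$ the point is that $h(v)^{\,n-m}$ must divide $p^{(n)}_i(v)$ as polynomials. I would prove this by a multinomial (Leibniz) expansion: writing $h\bigl((1+w)v\bigr)=\sum_i w^i q_i(v)$ with $q_0(v)=h(v)$ and $\deg_v q_i\le\delta_h$, one has $p^{(n)}_i(v)=\sum_{i_1+\cdots+i_n=i}q_{i_1}(v)\cdots q_{i_n}(v)$; because the $i_j$ are non-negative integers summing to $i\le m<n$, at least $n-i\ (\ge n-m\ge1)$ of the factors equal $q_0=h$, so $h(v)^{\,n-m}\mid h(v)^{\,n-i}\mid p^{(n)}_i(v)$. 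Dividing out, $h(v)^{m-n}p^{(n)}_i(v)$ is a polynomial of degree $\le n\delta_h-(n-m)\delta_h=m\delta_h$, and the identity holds in the quotient ring since $h$ is invertible mod $L$.

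\textbf{Conclusion.}
With the previous step in hand, $t_h^{\,n}K^h_m(u,v)$ is a polynomial with $\deg_v\le m\delta_h$. Moreover its $v^{m\delta_h}$ component is $h_{\delta_h}^{\,m}U_m(u)$ (from $h(v)^m$) plus a combination of $U_{m-i}(u)$ with $i\ge1$; since $U_0,\dots,U_{L-1}$ are linearly independent, this component is nonzero, so in fact $\deg_v t_h^{\,n}K^h_m(u,v)=m\delta_h$ for every $n\ge0$. Therefore $K^h_m$ is consistently $m\delta_h$-local under $t_h$, as claimed. (If the statement is read for all $n\in\mathbb{Z}$, the negative cases follow from the periodicity $t_h^{\,p^{N_m}}K^h_m=K^h_m$, itself an immediate consequence of $u^{p^{N_m}}U_m(u)=U_m(u)$ together with $h(x)^{p^{N_m}}=h(x^{p^{N_m}})$.) Every step other than the divisibility claim is routine bookkeeping with the $U_m$-expansion and Property~\ref{uprop1}.
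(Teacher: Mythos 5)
Your proof is correct, and it takes a genuinely different route from the paper's. The paper proves the lemma by recursion in $m$: it computes $\Delta_h K^h_m(u,v) = U_m(u)h(v)^{m-1}\bigl(h(uv)-h(v)\bigr)$, expands via $u^k\to u^{\bar k}$ (Property~\ref{uprop2}) and a binomial expansion of $u^{\bar k}-1$, rewrites the result as a sum of shifted $K^h_{m-n}$ with $n\ge1$, invokes the inductive hypothesis to see each summand is consistently $m\delta_h$-local, and then concludes via Lemma~\ref{lemma:deltalocal}. You instead compute $t_h^n K^h_m$ in closed form for every $n$ at once, expanding $h(uv)^n$ about $u=1$ in the $U_{m-i}$ basis and reducing the whole claim to a divisibility statement $h(v)^{\,n-m}\mid p^{(n)}_i(v)$, which you establish by a pigeonhole count in the multinomial expansion $p^{(n)}_i=\sum_{i_1+\cdots+i_n=i}q_{i_1}\cdots q_{i_n}$. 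Both arguments are valid. The paper's recursion is more modular — it leans on the already-established Lemma~\ref{lemma:deltalocal} so that only one application of $t_h$ ever needs to be unpacked, which keeps it stylistically consistent with the surrounding lemmas (several of which use the same $\Delta_h$-plus-recursion pattern). Your direct computation is self-contained (it does not need Lemma~\ref{lemma:deltalocal} at all, nor Property~\ref{uprop2} beyond the final periodicity remark) and gives strictly more information: an explicit formula for each iterate $t_h^n K^h_m$ together with the clean combinatorial fact $h^{\,n-i}\mid p^{(n)}_i$, which the paper's inductive argument never makes visible. Your bookkeeping of the two cases $n\le m$ and $n>m$, the exact-division degree count $\deg\le n\delta_h-(n-m)\delta_h=m\delta_h$, and the observation that the $U_m(u)$-component of the $v^{m\delta_h}$ coefficient equals $h_{\delta_h}^m\neq 0$ (giving equality, not just the upper bound) are all sound.
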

\begin{proof}
Let us prove by recursion in $m$.  
The base case, $m=0$, is true since $U_0(u)$ is indeed consistently $0$-local.
Now, assume $m>0$ and we have proved this Lemma for all $m^\prime < m$.  

Let us compute $\Delta_h K^h_m(u,v)$,
\begin{align}
    \Delta_h K^h_m(u,v) &= U_m(u) h(v)^{m-1}(h(uv)-h(v)) \\
     &= U_m(u) h(v)^{m-1} \sum_{k=0}^{\delta_h} h_k v^k(u^{\overline{k}}-1)  
\end{align}
where $h(x) = \sum_{x=0}^{\delta_h} h_k x^k$, and we have used Property~\ref{uprop2} of $U_m(u)$ to replace $u^k \rightarrow u^{\overline{k}}$, where $\overline{k} \equiv (k\mod p^{N_m})$ is positive.
Then, we may use the identity
\begin{equation}
    u^{\overline{k}}-1 = \sum_{n=0}^{\overline{k}}\binom{\overline{k}}{n} (u-1)^n
\end{equation}
to expand
\begin{align}
    \Delta_h K^h_m(u,v) &= U_m(u) h(v)^{m-1} \sum_{k=0}^{\delta_h}\sum_{n=0}^{\overline{k}} \binom{\overline{k}}{n} h_k v^k(u-1)^n\\
    &=  \sum_{k=0}^{\delta_h}\sum_{n=0}^{\overline{k}} h_k v^k h(v)^{n-1}K^h_{m-n}(v)
\end{align}
and note that by our recursion assumption,
$v^k h(v)^{n-1} K^h_{m-n}$ is consistently $(k,(m-1)\delta_h+k)$-local.
Since $0\leq k \leq \delta_h$, each term is therefore consistently $m\delta_h$-local.
Thus, $\Delta_h K^h_m(u,v)$ is consistently $m\delta_h$-local.
By Lemma~\ref{lemma:deltalocal}, $K^h_m(u,v)$ is therefore also $m\delta_h$-local.
Finally, the $v$-degree of $K^h_m(u,v)$ saturates $m\delta_h$ since the $m$th component of $t_h^n K^h_m(u,v)$ has $v$-degree $m\delta_h$ for all $n$.
The proof follows for all $m$ by recursion.
\end{proof}

Thus, a family of consistently $\ell$-local $W(u,v)$ may be created by a linear sum over of $\ell$-local $v^kK^h_m(u,v)$, over $k$ and $m$.  
However, this may not be exhaustive: there may be some consistently $\ell$-local $W(u,v)$ that are not in this family.  
To show exhaustiveness, we need to show that $\{V_m(v) = h(v)^m\}$ generates an optimal basis for $t_h$.
This is not true generally, but is true in the case that $h(x)$ is irreducible, which our next lemma addresses.
Notice that $V_m(v) = h(v)^m$ are consistent with the properties of being generators of an optimal basis from Lemma~\ref{lemma:optimalgen}, $V_0(v)=1$ and $V_m(v)\mid V_{m^\prime}(v)$ for all $m^\prime\geq m$.

\begin{lemma}
    Suppose $h(x)$ is an irreducible polynomial. 
    Then,  $\{V_m(v) = h(v)^m\}$ generates an optimal basis for $t_h$.  
    \label{lemma:irreducible}
\end{lemma}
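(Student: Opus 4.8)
\emph{Proof strategy.} The plan is to establish both directions of the equivalence defining ``generates an optimal basis''. The ``if'' direction follows immediately from Lemma~\ref{lemma:kloc}: if $h(v)^m\mid W_m(v)$ for every $m$, write $W_m(v)=q_m(v)h(v)^m$, so that $W(u,v)=\sum_m q_m(v)\,U_m(u)h(v)^m=\sum_m q_m(v)\,K^h_m(u,v)$; since each $K^h_m$ is consistently $m\delta_h$-local and $\deg_v\!\big(q_m(v)h(v)^m\big)=\deg_v W_m(v)\le\ell$, each summand, hence the sum, is consistently $\ell$-local. All the substance is in the converse, which I would prove by finite induction on the largest index $M$ with $W_M(v)\ne 0$.

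\emph{The induction.} The base case $M=0$ is trivial since $h(v)^0=1$. For the inductive step I would split $W(u,v)=U_M(u)W_M(v)+W^{<M}(u,v)$, with $W^{<M}$ of largest index $<M$, and argue in two moves: (i) show $h(v)^M\mid W_M(v)$; (ii) once (i) holds, $U_M(u)W_M(v)=q(v)K^h_M(u,v)$ is consistently $\ell$-local by Lemma~\ref{lemma:kloc}, so $W^{<M}=W-U_M(u)W_M(v)$ is again $\ell$-local and consistently $\ell$-local with largest index $<M$; the induction hypothesis applied to $W^{<M}$ (noting $W^{<M}_m=W_m$ for $m<M$) then finishes the proof. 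Everything thus reduces to move (i).

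\emph{The key identity.} For move (i) I would exploit the recursive structure of the $U_m(u)$ basis (properties~\ref{uprop1},~\ref{uprop2},~\ref{uprop4}). From $h(v)\,t_hW(u,v)=h(uv)W(u,v)$ and $t_hW=W+\Delta_hW$ one gets $h(v)\,\Delta_hW=\big(h(uv)-h(v)\big)W$. Expanding $h(uv)-h(v)=\sum_k h_k v^k(u^k-1)$, reducing $u^k$ modulo $p^{N_m}$-periodicity and expanding in powers of $(u-1)$ (high powers killed as in the proof of Lemma~\ref{lemma:kloc}), and reading off the $U_{M-1}(u)$-component --- whose only source is the top term $U_M(u)W_M(v)$, through the $U_{M-1}$-coefficient $k\bmod p^{N_M}$ of $(u^k-1)U_M(u)$ --- one obtains the polynomial identity
\begin{equation}
    h(v)\,\big(\Delta_hW\big)_{M-1}(v)=\partial_h(v)\,W_M(v),\qquad \partial_h(v)\equiv\sum_k\big(k\bmod p^{N_M}\big)\,h_k\,v^k ,
\end{equation}
which is a genuine identity in $\mathbb{F}_p[v]$ because every polynomial appearing has $v$-degree $\ll L$. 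By Lemma~\ref{lemma:deltalocal}, $\Delta_hW$ is consistently $\ell$-local, and by property~\ref{uprop4} its largest index is $<M$, so the induction hypothesis gives $h(v)^{M-1}\mid(\Delta_hW)_{M-1}(v)$; multiplying by $h(v)$ and using the identity yields $h(v)^M\mid\partial_h(v)W_M(v)$. Since $h$ is irreducible, move (i), and hence the lemma, follows \emph{provided} $h(v)\nmid\partial_h(v)$.

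\emph{The main obstacle.} The one genuinely delicate point --- and the only place irreducibility is really used --- is therefore $h(v)\nmid\partial_h(v)$. The $v^0$ coefficient of $\partial_h$ is $(0\bmod p^{N_M})\,h_0=0$, while $h(0)\ne 0$ (recall $h$, like $f$, is not a monomial; in all applications $h$ is an irreducible factor of $\tilde f$, which has nonzero constant term). Since $\deg_v\partial_h\le\deg_v h$, the divisibility $h\mid\partial_h$ would force $\partial_h=c\,h$ for a constant $c$, and matching $v^0$-coefficients gives $c\,h_0=0$, so $c=0$ and $\partial_h\equiv 0$. But $\partial_h\equiv 0$ means $h_k\ne 0\Rightarrow p\mid k$, i.e.\ $h(v)=\hat h(v^p)=\check h(v)^p$ by the freshman's dream --- contradicting irreducibility of $h$ (which has positive degree). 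Hence $h\nmid\partial_h$, completing move (i) and the induction. As a consistency check, the generators $V_m(v)=h(v)^m$ produced here satisfy $V_0=1$ and $V_m\mid V_{m'}$ for $m'\ge m$, exactly as Lemma~\ref{lemma:optimalgen} requires of any optimal-basis-generating family.
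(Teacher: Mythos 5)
Your proof is correct and follows essentially the same route as the paper's: the same $U_m(u)$-decomposition, the same induction on the top index $M$, and the same key identity (your $\partial_h(v)$ is exactly the paper's $g(v)$, obtained by reading off the $U_{M-1}$-component of $\Delta_h W$), with only a cosmetically different argument that $h\nmid\partial_h$ (you force $\partial_h\equiv 0$ via the vanishing constant term; the paper shows directly that $g=\mathrm{const}\cdot h$ would make $h$ equal to $v^{k_0}$ times a perfect $p$-th power). One small point worth tightening: the lemma as stated allows any irreducible $h$, including $h(v)\propto v$, where your $h_0\neq 0$ step does not literally apply, so you should explicitly dispose of the monomial case as trivial up front, as the paper does, rather than appealing to ``all applications''.
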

\begin{proof}
To prove that $\{h(v)^m\}$ generates an optimal basis for $t_h$, we need to show that for any $\ell$-local $W(u,v)$, it is consistently $\ell$-local if and only if $h(v)^m\mid W_m(v)$ must hold for all $m$.  

First, the reverse implication follows from Lemma~\ref{lemma:kloc}: if $W(u,v)$ is $\ell$-local and each $h(v)^m \mid W_m(v)$, then $W(u,v)$ is also consistently $\ell$-local.
We must now prove the forward implication.

Let $W(u,v)$ by consistently $\ell$-local under $t_h$, with the expansion
\begin{equation}
    W(u,v) = \sum_{m=0}^M U_m(u) W_m(v)
\end{equation}
where $M$ is the largest $m$ value in the expansion, and $W_M(v)\neq0$.
We need to prove that this implies that $h(v)^m\mid W_m(v)$ for all $m$.
We now prove by recursion, and assume that this has been proven for all $M^\prime < M$.
Note that for the base case $M=0$, $\{h(v)^m\}$ indeed generates an optimal basis for all $M=0$ polynomials $W(u,v)$.
If $h(v) = c v^{k}$ is a monomial, then this proof is also trivial, so we will assume this is not the case in the following.

Consider  $\Delta_h W(u,v)$, which by Lemma~\ref{lemma:deltalocal}, also has maximum $m<M$ and is consistently $\ell$-local.
Take the $m=M-1$th component of $\Delta_h W(u,v)$, obtained by $(u-1)^{M-1} \Delta_h W(u,v)$, which by a straightforward calculation is given by
\begin{equation}
    (u-1)^{M-1} \Delta_h W(u,v) = g(v)h(v)^{-1}W_M(v) U_0(u)
    \label{eq:deltacomp}
\end{equation}
where 
\begin{equation}
     g(v) = \sum_{k=0}^{\delta_h} \overline{k} h_k v^k
\end{equation}
$\delta_h\equiv \dv h(v)$, $h_k$ is defined through $h(v) = \sum_k h_k v^k$, and $\overline{k} \equiv (k\mod p^{N_M})$.
Note that since $W(u,v)$ is $\ell$-local, despite Eq~\ref{eq:deltacomp} containing $h(v)^{-1}$, is of $v$-degree bounded by $\ell$. 
By our recursion assumption, $h(v)^{M-1}$ must divide Eq~\ref{eq:deltacomp}.

Let us prove that $h(v) \nmid g(v)$ and $g(v) \neq 0$. 
First, since $\dv h(v) = \dv g(v)$ and $h(v)$ is irreducible, if $h(v)$ is to possible divide $g(v)$, it must be that $g(v) = \mathrm{const}\cdot h(v)$.  
This can only be the case if $(k\mod p^{N_M})\equiv k_0$ is the same for all $k$.
But then,
\begin{align}
    h(v) &= k_0 \sum_{i=0}^{i_\mathrm{max}-1} h_{k_0+ip^{N_M}}v^{k_0+i p^{N_M}}\\
    &= k_0 v^{k_0}\left( \sum_{i=0}^{i_\mathrm{max}-1} h_{k_0+ip^{N_M}} v^i \right)^{p^{N_M}}
\end{align}
which contradicts with the fact that $h(v)$ is irreducible, as $i_\mathrm{max}>1$ and $p^{N_M}>1$ (which is the case here).  
The $g(v)=0$ is the $k_0=0$ case of this.
Thus, $g(v)\neq0$ and $h(v)\nmid g(v)$.

Going back, we have that
\begin{align}
    h(v)^{M-1} \mid g(v) h(v)^{-1} W_M(v)\\
    \implies h(v)^M \mid g(v) W_M(v)
\end{align}
but since $h(v)\nmid g(v)$, it must be the case that $h(v)^M \mid W_M(v)$.

Now, consider $W^{\prime}(u,v) = W(u,v) - U_M(u)W_M(v)$, which is a sum of two consistently $\ell$-local polynomials (using Lemma~\ref{lemma:kloc}), and so is also consistently $\ell$-local.
By our recursion assumption, it then follows that $h(v)^m \mid W_m(v)$ for $m<M$. 
Thus, $h(v)^m \mid W_m(v)$ holds for all $m$ in $W(u,v)$.

By recursion in $M$, we have therefore proved that for all $W(u,v)$, $h(v)^m \mid W_m(v)$ must be true for all $m$.
Thus, $\{h(v)^m\}$ generates an optimal basis for $t_h$.
\end{proof}

If $\tilde{f}(x)=x^{-k_a} f(x)$ is irreducible, then Lemma~\ref{lemma:irreducible} is sufficient to obtain all consistently $(a,b)$-local $W(u,v)$.
To do so, we simply have to find all $(k,m)$ where the basis element $v^k K^{\tilde{f}}_m(u,v)$ is $(a,b)$-local, and take a linear combination of them.
If there are $N_{loc}(a,b)$ such basis elements, then the $p^{N_{loc}(a,b)}$ possible linear combinations are exhaustive.

In the case that $\tilde{f}(x)$ is not irreducible, there may be consistently $(a,b)$-local polynomials that do not fall within this family.  
However, note that it is always possible to expand $\tilde{f}(x)$ in terms of its unique irreducible factors
\begin{equation}
    \tilde{f}(x) = f_0(x)^{r_0} f_1(x)^{r_1} \dots
\end{equation}
The next two Lemmas allows us to use this result construct an optimal basis for $\tilde{f}(x)$, based on this factorization.

\begin{lemma}
    Let $h(x)$ be an irreducible polynomial, and $r>0$ an integer.  
    Then, $\{V_m(v) = h(v)^{\overline{m}}\}$ 
    generates an optimal basis for $t_{h^r}$,
    where $\overline{m} = \lfloor m/p^\alpha\rfloor p^\alpha$ and $\alpha$ is the power of $p$ in the prime factorization of $r$.
    \label{lemma:power}
\end{lemma}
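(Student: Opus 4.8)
The plan is to reduce the claim to the irreducible case already settled in Lemma~\ref{lemma:irreducible} by passing to the ``refined'' variables $u^{p^\alpha},v^{p^\alpha}$. Write $r=p^\alpha s$ with $\gcd(s,p)=1$, and recall that $t_{h^r}$ is multiplication by the polynomial $t_h(u,v)^r$, so $t_{h^r}=t_h^r$ as operators. Since $t_h^{L}(u,v)=h(v^{L})^{-1}h((uv)^{L})=h(1)^{-1}h(1)=1$ and $L=p^N$, the order of $t_h$ (and hence of $t_{h^{p^\alpha}}=t_h^{p^\alpha}$) is a power of $p$; as $\gcd(s,p)=1$, $s$ is then invertible modulo that order, so $\langle t_{h^r}\rangle=\langle t_{h^{p^\alpha}}^{s}\rangle=\langle t_{h^{p^\alpha}}\rangle$. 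Consistent $\ell$-locality of a polynomial under an evolution operator depends only on the cyclic group that operator generates, so it suffices to prove the lemma for $r=p^\alpha$, i.e.\ for $t_{h^{p^\alpha}}$ (whose associated exponent in the lemma is still $\alpha$, so the target generators $h(v)^{\overline{m}}$ are unchanged).

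The second step is to ``diagonalize'' $t_{h^{p^\alpha}}$ into sectors. By the freshman's dream, $t_{h^{p^\alpha}}(u,v)=h(v^{p^\alpha})^{-1}h(u^{p^\alpha}v^{p^\alpha})$, a Laurent polynomial in $u':=u^{p^\alpha}$ and $v':=v^{p^\alpha}$ alone (here $h$ is invertible at size $L'\equiv L/p^\alpha=p^{N-\alpha}$ iff it is at size $L$, since $x^{L}-1=(x^{L'}-1)^{p^\alpha}$ and $h$ is irreducible). Decompose an arbitrary polynomial as $W(u,v)=\sum_{i,j=0}^{p^\alpha-1}u^{i}v^{j}\,Q_{ij}(u^{p^\alpha},v^{p^\alpha})$. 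Multiplication by $t_{h^{p^\alpha}}(u,v)$ preserves each sector $u^{i}v^{j}\,\mathbb{F}_p[u^{p^\alpha},v^{p^\alpha}]$ and acts on it exactly as $t_h$ does at system size $L'$ on $(u',v')$. Since distinct $j$ occupy distinct residues mod $p^\alpha$, the leading $v$-terms of different sectors cannot cancel, so $\deg_v W=\max_{i,j}\,(j+p^\alpha\deg_{v'}Q_{ij})$; hence $W$ is $\ell$-local iff $\deg_{v'}Q_{ij}\le\ell_j:=\lfloor(\ell-j)/p^\alpha\rfloor$ for all $i,j$, and therefore $W$ is consistently $\ell$-local under $t_{h^{p^\alpha}}$ iff each $Q_{ij}$ is consistently $\ell_j$-local under $t_h$ at size $L'$. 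Lemma~\ref{lemma:irreducible} converts this into the divisibility condition $h(v')^{m'}\mid(Q_{ij})_{m'}(v')$ for all $m'$, where $Q_{ij}=\sum_{m'}U_{m'}(u')(Q_{ij})_{m'}(v')$.

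The last step translates this back into conditions on the coefficients $W_m(v)$ of $U_m(u)$ in $W$. Substituting $U_{m'}(u^{p^\alpha})=(u-1)^{p^\alpha(L'-1-m')}=(u-1)^{L-p^\alpha(m'+1)}$ and expanding $u^{i}=\sum_{t}\binom{i}{t}(u-1)^{t}$, one finds that $W_m(v)$ receives contributions only from the $Q_{ij}$ with $\lfloor m/p^\alpha\rfloor=m'$, i.e.\ exactly from the block of $p^\alpha$ consecutive $m$-values on which $\overline{m}=p^\alpha m'$, and that for each fixed $j$ the map from $\big((Q_{0j})_{m'},\dots,(Q_{p^\alpha-1,j})_{m'}\big)$ to the corresponding $p^\alpha$ coefficients of $W$ in that block is the unitriangular (hence $\mathbb{F}_p$-invertible) binomial matrix $[\binom{i}{t}]$. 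Using $h(v)^{\overline{m}}=h(v^{p^\alpha})^{m'}$, the facts that $h(0)\neq 0$ and that $\mathbb{F}_p[v^{p^\alpha}]$ is a polynomial subring of $\mathbb{F}_p[v]$ in which divisibility is detected sector-by-sector, together with invertibility of $[\binom{i}{t}]$, one obtains $h(v)^{\overline{m}}\mid W_m(v)$ for all $m$ in the block iff $h(v')^{m'}\mid(Q_{ij})_{m'}(v')$ for all $i,j$. Combining over all blocks $m'$ gives ``$W$ consistently $\ell$-local under $t_{h^r}$'' $\iff$ ``$h(v)^{\overline{m}}\mid W_m(v)$ for all $m$'', which is precisely the assertion that $\{V_m(v)=h(v)^{\overline{m}}\}$ generates an optimal basis for $t_{h^r}$.

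The step I expect to be the real obstacle is this last one: checking that the ``residue-mod-$p^\alpha$'' decomposition of $W$ and the ``$U_m$'' decomposition interlock as claimed, so that each length-$p^\alpha$ block $\{p^\alpha m',\dots,p^\alpha m'+p^\alpha-1\}$ of $m$ is carried, by a unitriangular binomial matrix, onto the sectors at fixed $m'$, and that divisibility by powers of $h$ is transported faithfully across the substitution $v\mapsto v^{p^\alpha}$. Everything else is either a routine freshman's-dream computation or a direct appeal to Lemma~\ref{lemma:irreducible} (and, implicitly through it, to Lemmas~\ref{lemma:deltalocal}--\ref{lemma:kloc}).
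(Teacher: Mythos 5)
Your proposal is correct, and the overall strategy — reduce to $r=p^\alpha$ via coprimality, then decompose into $p^\alpha\times p^\alpha$ sectors, treat each sector at refined system size $L'=L/p^\alpha$ via Lemma~\ref{lemma:irreducible}, and translate back — matches the paper's proof. The one genuine structural difference is your choice of sector basis: you decompose $W(u,v)=\sum_{i,j}u^{i}v^{j}Q_{ij}(u^{p^\alpha},v^{p^\alpha})$ using monomials $u^{i}$, whereas the paper writes $W(u,v)=\sum_{i,j}(u-1)^{r-1-i}v^{j}W_{ij}(u^{r},v^{r})$. The paper's choice is tailored to the $U_m$ basis: $(u-1)^{r-1-i}(u^{r}-1)^{L/r-1-\tilde m}=U_{i+r\tilde m}(u)$ holds on the nose, so the index correspondence $m=i+r\tilde m$ and the divisibility transfer $h(v)^{\overline m}=h(v^{r})^{\tilde m}$ both fall out immediately, with no change-of-basis matrix at all. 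Your monomial choice forces the extra step you flagged as the ``real obstacle'' — showing that the binomial matrix $[\binom{i}{t}]$ is unitriangular so that $h(v')^{m'}$-divisibility is preserved block-by-block under the change of basis. You carry this out correctly, but it is work the paper avoids entirely by picking the right basis at the outset. As a minor point, your reduction from general $r$ to $r=p^\alpha$ (via ``the order of $t_h$ is a $p$-power, so $\langle t_{h^{r}}\rangle=\langle t_{h^{p^\alpha}}\rangle$'') is done globally, while the paper phrases the same coprimality argument pointwise, using $t_h^{p^{N_M}}W=W$ where $M$ depends on $W$; both are valid, and you do this reduction up front while the paper defers it to a third case at the end. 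Net: correct proof, same skeleton, a slightly heavier translation step owing to the monomial sector basis.
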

\begin{proof}
First, note that if $p\nmid r$, $p$ is coprime to $r$, then being consistently $\ell$-local under $t_h$ is equivalent to being consistently $\ell$-local under $t_{h^r}$.
This follows from the fact that, if $W(u,v)$ has maximum $m$ value $m=M$, then $t_h^{p^{N_M}} W(u,v) = W(u,v)$.
If $W(u,v)$ is consistently $\ell$-local under $t_h$, then $t_h^n W(u,v)=t_h^{n\mod p^{N_m}} W(u,v)$ is, by definition, $\ell$-local for all $n$.
If $W(u,v)$ is instead consistently $\ell$-local under $t_{h^r} = t_h^r$, then $t_h^{rn} W(u,v) = t_h^{rn\mod p^{N_m}} W(u,v)$ is $\ell$-local for all $n$.
But, $rn$ takes on all value mod $p^{N_m}$, and so these two conditions are equivalent.
Thus, Lemma~\ref{lemma:irreducible} states that $\{h(v)^m\}$ generates an optimal basis for $t_h$, which therefore also generates an optimal basis for $t_{h^r}$.
Indeed, if $p\nmid r$, $h(v)^{\overline{m}} = h(v)^m$ and the proof is complete.

Next, consider the case where $r=p^\alpha$ is a power of $p$.
Notice that $t_h^r(u,v) = t_h(u^r,v^r)$ in this case is a function of only $u^r$ and $v^r$.
Let $W(u,v)$ be $\ell$-local and decompose it as
\begin{equation}
    W(u,v) = \sum_{i=0}^{r-1}\sum_{j=0}^{r-1} (u-1)^{r-1-i} v^j W_{ij}(u^r,v^r)
\end{equation}
such that each of the $ij$ ``block'' does not mix under evolution by $t_h^r$.  
Thus, each $ij$ may be treated as an independent system in terms of variables $\tilde{u}\equiv u^r$ and $\tilde{v}\equiv v^r$, with $\tilde{L}\equiv L/r$.
Thus, by Lemma~\ref{lemma:irreducible}, each $ij$ component (and therefore $W(u,v)$) is only consistently $\ell$-local if and only if in the decomposition
\begin{equation}
     W_{ij}(u^r, v^r) =  \sum_{\tilde{m}=0}^{L/r-1} (u^r-1)^{L/r-1-\tilde{m}} W_{ij,\tilde{m}}(v^r)
\end{equation}
 $h(v^r)^{\tilde{m}} \mid W_{ij,\tilde{m}}(v^r)$ for all $i,j,\tilde{m}$.
Defining $m\equiv i+\tilde{m}r $, $W(u,v)$ may be written as
\begin{align}
    W(u,v) &= \sum_{m=0}^{L-1}
    U_m(u) \sum_{j=0}^{r-1} v^j W_{ij,\tilde{m}}(u^r,v^r)\\
    &= \sum_{m=0}^{L-1}
    U_m(u) W_m(v)
\end{align}
where $W_m(v) = \sum_{j=0}^{r-1} v^j W_{ij,\tilde{m}}(u^r, v^r)$,
 so $W(u,v)$ is consistently $\ell$-local if and only if $h(v^r)^{\tilde{m}}\mid W_m(v)$.
To eliminate reference to $\tilde{m}$, we may use the fact that $\tilde{m}=\lfloor m / r\rfloor$, such that $\overline{m}= r\tilde{m}$.
Therefore, $W(u,v)$ is consistently $\ell$-local if and only if $h(v)^{\overline{m}}\mid W_m(v)$ for all $m$, and
$\{h(v)^{\overline{m}}\}$ generates an optimal basis for $t_{h^r}$ when $r=p^\alpha$ as well.

Finally, consider the general case $r=\tilde{r}p^\alpha$, where $p\nmid \tilde{r}$.  
We have just shown that $\{h(v)^{\overline{m}}\}$ generates an optimal basis for $t_{h^{p^\alpha}}$.  
Since $\tilde{r}$ is coprime to $p$, by our first argument, this also generates an optimal basis for $t_{h^{r}}$.
\end{proof}
\begin{lemma}
Suppose $\{V_{1,m}(v)\}$ and $V_{2,m}(v)$ generate optimal bases for $t_{h_1}$ and $t_{h_2}$ respectively, and $V_{1,m}(v)$ and $V_{2,m^\prime}(v)$ share no common factors for all $m$, $m^\prime$.
Then, $\{V_m(v) = V_{1,m}(v) V_{2,m}(v)\}$ generates an optimal basis for $t_{h_1 h_2}$.
\label{lemma:combine}
\end{lemma}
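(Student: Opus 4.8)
The starting observation is that $t_{h_1h_2}=t_{h_1}t_{h_2}$ as operators (both are multiplication, and $t_{h_1h_2}(u,v)=(h_1h_2)(v)^{-1}(h_1h_2)(uv)=t_{h_1}(u,v)\,t_{h_2}(u,v)$), so the two commute and $t_{h_1h_2}^{\,n}=t_{h_1}^{\,n}t_{h_2}^{\,n}$. The coprimality hypothesis means $V_{1,m}(v)$ and $V_{2,m}(v)$ share no factor, so $V_{1,m}\mid W_m$ together with $V_{2,m}\mid W_m$ is equivalent to $V_{1,m}V_{2,m}\mid W_m$. Combining these with the optimality of $\{V_{1,m}\}$ for $t_{h_1}$ and of $\{V_{2,m}\}$ for $t_{h_2}$, the lemma is equivalent to the following: \emph{an $\ell$-local $W(u,v)$ is consistently $\ell$-local under $t_{h_1h_2}$ if and only if it is consistently $\ell$-local under $t_{h_1}$ and under $t_{h_2}$ separately.} I would prove the two directions of this in turn.

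For the ``if'' direction, suppose $W$ is $\ell$-local and consistently $\ell$-local under both $t_{h_1}$ and $t_{h_2}$. By optimality, $V_{1,m}\mid W_m$ and $V_{2,m}\mid W_m$ for all $m$, and $t_{h_2}^{\,b}W$ is $\ell$-local for all $b$. The key point is that the condition ``$V_{1,m}\mid(\cdot)_m$ for all $m$'' is preserved by a single application of $t_{h_2}$: by Property~\ref{uprop4}, $(t_{h_2}W)_m=W_m(v)+\sum_{m'>m}q_{m'}(v)W_{m'}(v)$, and $V_{1,m}\mid V_{1,m'}\mid W_{m'}$ for $m'>m$ by Lemma~\ref{lemma:optimalgen}, so $V_{1,m}\mid(t_{h_2}W)_m$. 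Iterating gives $V_{1,m}\mid(t_{h_2}^{\,b}W)_m$ for all $m,b$; since $t_{h_2}^{\,b}W$ is also $\ell$-local, optimality of $\{V_{1,m}\}$ makes it consistently $\ell$-local under $t_{h_1}$. Hence $t_{h_1}^{\,a}t_{h_2}^{\,b}W$ is $\ell$-local for all $a,b$, and in particular $t_{h_1h_2}^{\,n}W=t_{h_1}^{\,n}t_{h_2}^{\,n}W$ is $\ell$-local for all $n$.

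For the ``only if'' direction I would induct on the $U$-degree $M$ of $W$ (the largest $m$ with $W_m\neq0$), the case $M=0$ being immediate since $V_{1,0}=V_{2,0}=1$. For the step, $\Delta_{h_1h_2}W\equiv t_{h_1h_2}W-W$ has $U$-degree $\leq M-1$ and, by Lemma~\ref{lemma:deltalocal}, is consistently $\ell$-local under $t_{h_1h_2}$, so by the inductive hypothesis $V_{1,m}V_{2,m}\mid(\Delta_{h_1h_2}W)_m$ for all $m$. Multiplying by $(u-1)^{M-1}$ and reducing exponents mod $p^{N_M}$ via Property~\ref{uprop2}, a calculation parallel to Eq.~\ref{eq:deltacomp} gives $(\Delta_{h_1h_2}W)_{M-1}=\bigl(h_1(v)^{-1}g_{h_1}(v)+h_2(v)^{-1}g_{h_2}(v)\bigr)W_M(v)$, where $g_h(v)=\sum_k\overline{k}\,h_kv^k$ as in the proof of Lemma~\ref{lemma:irreducible}; so $V_{1,M-1}V_{2,M-1}h_1(v)h_2(v)$ divides $\bigl(g_{h_1}(v)h_2(v)+h_1(v)g_{h_2}(v)\bigr)W_M(v)$. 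From this I would deduce $V_{1,M}V_{2,M}\mid W_M$; then $U_M(u)W_M(v)$ is consistently $\ell$-local under $t_{h_1h_2}$ by the ``if'' direction already established, so $W-U_M(u)W_M(v)$ is consistently $\ell$-local of $U$-degree $\leq M-1$ and the inductive hypothesis finishes the remaining components.

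The one nontrivial step is the deduction ``$V_{1,M-1}V_{2,M-1}h_1h_2\mid(g_{h_1}h_2+h_1g_{h_2})W_M\;\Rightarrow\;V_{1,M}V_{2,M}\mid W_M$'', and this is exactly where coprimality is essential --- the implication is false when $h_1=h_2$ and $p=2$, e.g. $W=U_1(u)$, for which $t_{h_1h_2}W=W$ but $t_{h_1}W\neq W$. The plan is to localize at each irreducible $q$: since $\{V_{1,\cdot}\}$ and $\{V_{2,\cdot}\}$ share no factor --- and hence, in the intended application where these bases are built from the irreducible factors of $h_1$ and $h_2$, the polynomials $h_1$ and $h_2$ share none --- the two summands $g_{h_1}h_2$ and $h_1g_{h_2}$ have disjoint ``$q$-supports'' among the factors of $V_{1,M}$ versus $V_{2,M}$, so the obstructions to locality coming from the $h_1$-side and the $h_2$-side cannot cancel, and one compares $q$-adic valuations. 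When $v_q(h_1)$ is prime to $p$ this is a one-line count from the constraint above (using $v_q(g_{h_1})=v_q(h_1)-1$), yielding $v_q(W_M)\geq v_q(V_{1,M})$; when $p\mid v_q(h_1)$ the single-step constraint is too weak and one must instead iterate $\Delta_{h_1h_2}$ and invoke the scale-$p^{\alpha}$ block decomposition exactly as in the proof of Lemma~\ref{lemma:power}, which I expect to be the technical heart of the argument. Assembling the two cases over all $q$ and all $m$ completes the induction, and with it the proof that $\{V_{1,m}V_{2,m}\}$ is an optimal basis for $t_{h_1h_2}$.
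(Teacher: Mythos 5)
Your reduction to the biconditional ``consistently $\ell$-local under $t_{h_1h_2}$ $\Leftrightarrow$ consistently $\ell$-local under both $t_{h_1}$ and $t_{h_2}$'' is the right framing, and your ``if'' direction is correct and in fact more careful than what the paper writes (the paper asserts ``and therefore also under $t_{h_1h_2}$'' without spelling out why $t_{h_1}^{a}t_{h_2}^{b}W$ stays $\ell$-local for mixed powers $a\neq b$; your observation that $V_{1,m}\mid(\cdot)_m$ is preserved by $t_{h_2}$ via Property~\ref{uprop4} and Lemma~\ref{lemma:optimalgen} closes that). However, the ``only if'' direction is where the real content lies, and there you have a genuine gap, which you yourself flag: the deduction from $V_{1,M-1}V_{2,M-1}h_1h_2\mid\bigl(g_{h_1}h_2+h_1g_{h_2}\bigr)W_M$ to $V_{1,M}V_{2,M}\mid W_M$ is not carried out, and you defer ``the technical heart of the argument'' (the case $p\mid v_q(h_1)$, where you'd need to iterate $\Delta_{h_1h_2}$ and reproduce the block decomposition of Lemma~\ref{lemma:power}). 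This is not a finished proof but a plan with the crucial step missing. There is also a structural obstacle to your plan as stated: the lemma only assumes the $V_{i,m}$ generate optimal bases and are pairwise coprime, which does not by itself give you the relation $V_{i,M}\mid V_{i,M-1}\,h_i$ (Lemma~\ref{lemma:optimalgen} gives $V_{i,M-1}\mid V_{i,M}$ but nothing in the other direction), so the $q$-adic bookkeeping you propose cannot even get off the ground without importing the explicit form $V_{i,m}=f_i(v)^{\overline m_i}$ from Lemma~\ref{lemma:power}, at which point you are no longer proving the lemma as stated.

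The paper avoids this entirely by a contradiction argument at the level of ``consistent locality'' rather than an explicit divisibility count. It first shows (using $\Delta_{h_1h_2}W$ of smaller $U$-degree, the recursion hypothesis, and the shift $t_{h_1}^{n}t_{h_1h_2}^{m}=t_{h_1}^{n+m}t_{h_2}^{m}$ with $n\equiv -m\pmod{p^{N_M}}$) that $W$ cannot be consistently $\ell$-local under $t_{h_1}$ without also being so under $t_{h_2}$, and vice versa. Then, to rule out $W$ failing under \emph{both}, it multiplies by $V_{1,M}(v)$: since $V_{1,m}\mid V_{1,M}$, the product $W'=V_{1,M}W$ becomes consistently $\ell'$-local under $t_{h_1}$, while coprimality ($V_{1,M}$ shares no factor with any $V_{2,m}$) guarantees $W'$ is still \emph{not} consistently local under $t_{h_2}$ --- contradicting the ``not one without the other'' claim. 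This ``repair one side, keep the other broken'' trick is exactly what sidesteps the delicate cancellation analysis you were struggling with; I'd recommend adopting it rather than pursuing the valuation-by-valuation route.
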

\begin{proof}
Let $W(u,v)$ be $\ell$-local which we expand as
\begin{equation}
    W(u,v) = \sum_{m=0}^M U_m(u)W_m(v)
\end{equation}
where $M$ is the largest $m$ for which $W_m(v)\neq 0$.
If $V_{1,m}(v) V_{2,m}(v) \mid W_m(v)$, then $W(u,v)$ is consistently $\ell$-local under $t_{h_1}$ and $t_{h_2}$, and therefore also under $t_{h_1 h_2}$.
We then need to prove the reverse implication,
that $W(u,v)$ being consistently $\ell$-local under $t_{h_1 h_2}$ implies $V_{1,m}(v) V_{2,m}(v) \mid W_m(v)$ for all $m$.
We will prove this by recursion in $M$.
The base case, $M=0$, is trivial since $V_{0}(v)=V_{1,0}(v)V_{2,0}(v)=1$ is a requirement from Lemma~\ref{lemma:optimalgen}.
Now, suppose this has been proven for all $M^\prime<M$.

First, assume that $W(u,v)$ is consistently $\ell$-local under $t_{h_1}$ but not $t_{h_2}$.  
Then, consider $\Delta_{h_1 h_2}W(u,v)$, which has largest $m<M$ and is consistently $\ell$-local under $t_{h_1 h_2}$ by Lemma~\ref{lemma:deltalocal}.  
Our recursion assumption, then, implies that $\Delta_{h_1 h_2}W(u,v)$ is also consistently $\ell$-local under $t_{h_1}$ and $t_{h_2}$ individually. 
Then,
\begin{equation}
    t_{h_1 h_2}^{m} W(u,v) = W(u,v) + \sum_{i=0}^{m-1}t_{h_1 h_2}^i \Delta_{12} W(u,v)
\end{equation}
and so
\begin{equation}
    t_{h_1}^n t_{h_1 h_2}^{m} W(u,v) = t_{h_1}^n W(u,v) + \sum_{i=0}^{m-1}t_{h_1}^nt_{h_1 h_2}^i \Delta_{12} W(u,v)
\end{equation}
which is $\ell$-local.
But, if we choose $n=(k-m \mod p^{N_M})$, then we get that $t_{h_2}^k W(u,v)$ is always $\ell$-local.
Thus, $W(u,v)$ is consistently $\ell$-local under $t_{h_2}$ as well, which contradicts our initial assumption.
Therefore, $W(u,v)$ cannot be consistently $\ell$-local under $t_{h_1}$ but not $t_{h_2}$.  
The same is also true with $h_1\leftrightarrow h_2$.

Next, assume $W(u,v)$ is neither consistently $\ell$-local under $t_{h_1}$ nor $t_{h_2}$.
Then, consider 
\begin{equation}
W^\prime(u,v) \equiv V_{1,M}(v)W(u,v)
\end{equation}
which is consistently $\ell+\dv V_{1,M}(v)\equiv \ell^\prime$-local under $t_{h_1 h_2}$ (notice that if $\ell\ll L$, then $\ell^\prime \ll L$ as well).
$W^\prime(u,v)$ is also $\ell^\prime$-local under $t_{h_1}$, since $V_{1,m}(v) \mid V_{1,M}(v)$ for all $m\leq M$ by Lemma~\ref{lemma:optimalgen}.
However, since $V_{1,M}(v)$ shares no common factors with any $V_{2,m}(v)$, $W(u,v)$ is still not consistently $\ell^\prime$-local under $t_{h_2}$.
But, we just showed previously that we cannot have a situation in which $W(u,v)$ is $\ell^\prime$-local under $t_{h_1h_2}$ and $t_{h_1}$ but not $t_{h_2}$, thus leading to a contradiction.
$W(u,v)$ must therefore be consistently $\ell$-local under both $t_{h_1}$ and $t_{h_2}$.

This means that $V_{1,m}(v)\mid W_m(u,v)$ and $V_{2,m}(v)\mid W_m(u,v)$ for all $m$.  
Since $V_{1,m}(v)$ and $V_{2,m}(v)$ share no common factors, this means that $V_{1,m}(v)V_{2,m}(v)\mid W_m(u,v)$.
Thus, $\{V_{1,m}(v) V_{2,m}(v)\}$ generates an optimal basis for $t_{h_1 h_2}$.
\end{proof}

We may now prove Theorem~\ref{theorem}.
Let us factorize $\tilde{f}(x)$ into its $N_f$ unique irreducible polynomials,
\begin{equation}
    \tilde{f}(x) = \prod_{i=0}^{N_f} f_i(x)^{r_i}
\end{equation}
Using Lemma~\ref{lemma:power}, an optimal basis for $t_{f_i^{r_i}}$, is generated by $\{f_i(v)^{\overline{m}_i}\}$, where $\overline{m}_i = \lfloor m/p^{\alpha_i}\rfloor p^{\alpha_i}$, and $\alpha_i$ is the power of $p$ in the prime factorization of $r_i$. 
Since $f_i(v)^{\overline{m}_i}$ for different $i$ share no common factors (as $f_i(v)$ are irreducible),  Lemma~\ref{lemma:combine} then says that $\{f_0(v)^{\overline{m}_0} f_1(v)^{\overline{m}_1}\}$ generates an optimal basis for $t_{f_{0}^{r_0}  f_{1}^{r_1}}$.
This may be iterated to construct an optimal basis for $t_{f_0^{r_0} f_1^{r_1} f_2^{r_2}}$ and so on.
Finally, one gets that $\{\prod_i f_i(v)^{\overline{m}_i}\}$ generates an optimal basis for $t_{\tilde{f}}$, which is therefore also an optimal basis for $t_f$.  
This is exactly $\{\mathcal{V}_m(v)\}$, and the proof is complete.

\begin{acknowledgements}
T.D. thanks Dominic Williamson, Fiona Burnell, and Beni Yoshida for helpful discussions and comments.
This research was supported in part by the National Science Foundation under Grant No. NSF PHY-1748958 and by the Heising-Simons Foundation.
\end{acknowledgements}

\bibliography{biblio}

\end{document}